\newtheorem{theorem}{Theorem}
\newtheorem{lemma}[theorem]{Lemma}
\newtheorem{corollary}[theorem]{Corollary}
\newtheorem{proposition}[theorem]{Proposition}
\newtheorem{definition}[theorem]{Definition}
\newtheorem{example}[theorem]{Example}
\let\brace\undefined
\DeclarePairedDelimiter{\abs}{\lvert}{\rvert}
\DeclarePairedDelimiter{\floor}{\lfloor}{\rfloor}
\DeclarePairedDelimiter{\brace}{\lbrace}{\rbrace}
\DeclarePairedDelimiter{\paren}{\lparen}{\rparen}
\newcommand{\C}{\mathbb{C}}
\newcommand{\Z}{\mathbb{Z}}
\newcommand{\T}{\mathsf{T}}
\newcommand{\F}{\mathsf{F}}
\newcommand{\FF}{\mathbb{F}}
\newcommand{\Aeven}{{A_{\mathrm{even}}}}
\newcommand{\Aodd}{{A_{\mathrm{odd}}}}
\newcommand{\Beven}{{B_{\mathrm{even}}}}
\newcommand{\Bodd}{{B_{\mathrm{odd}}}}
\newcommand{\Leven}{L_{\text{even}}}
\newcommand{\Lodd}{L_{\text{odd}}}
\newcommand{\LA}{L_{\text{A}}}
\let\Re\undefined
\let\Im\undefined
\DeclareMathOperator{\resum}{resum}
\DeclareMathOperator{\imsum}{imsum}
\DeclareMathOperator{\Re}{Re}
\DeclareMathOperator{\Im}{Im}
\DeclareMathOperator{\DFT}{DFT}
\renewcommand{\today}{July 27, 2019}
\def\ps@pprintTitle{%
     \let\@oddhead\@empty
     \let\@evenhead\@empty
     \def\@oddfoot{\footnotesize\itshape
       To appear in the \ifx\@journal\@empty Journal of Symbolic Computation
       \else\@journal\fi\hfill\today}%
     \let\@evenfoot\@oddfoot}
\begin{document}

\begin{frontmatter}

\title{Complex Golay Pairs up to Length $28$: A Search via Computer Algebra and Programmatic SAT}

\author[uw]{Curtis Bright}
\author[wlu]{Ilias Kotsireas}
\author[uw]{Albert Heinle}
\author[uw]{Vijay Ganesh}
\address[uw]{University of Waterloo}
\address[wlu]{Wilfrid Laurier University}

\begin{abstract}
We use techniques from the fields of computer algebra and satisfiability checking
to develop a new algorithm to search for complex Golay pairs.
We implement this algorithm and use it to
perform a complete search for complex Golay pairs of lengths up to~$28$.
In doing so, we find that complex Golay pairs exist in the lengths $24$ and $26$ but
do not exist in the lengths $23$, $25$, $27$, and $28$.
This independently verifies work done by F.~Fiedler in 2013 
and confirms the 2002 conjecture of Craigen, Holzmann, and Kharaghani 
that complex Golay pairs of length $23$ don't exist.
Our algorithm is based on the recently proposed SAT+CAS paradigm 
of combining SAT solvers with computer algebra systems to efficiently search large
spaces specified by both algebraic and logical constraints.
The algorithm has two stages:
first, a fine-tuned computer program uses functionality
from computer algebra systems and numerical libraries to construct a list containing every sequence
that could appear as the first sequence in a complex Golay pair up to equivalence.
Second, a programmatic SAT solver constructs every sequence (if any)
that pair off with the sequences constructed in the first stage
to form a complex Golay pair.
This extends work originally presented at the
International Symposium on Symbolic and Algebraic Computation (ISSAC) in 2018; 
we discuss and implement several improvements to our algorithm that enabled us to improve the
efficiency of the search and increase the maximum length
we search from length $25$ to $28$.
\end{abstract}

\begin{keyword}
Complex Golay pairs; Boolean satisfiability; SAT solvers; Exhaustive search; Autocorrelation
\end{keyword}

\end{frontmatter}

\section{Introduction}

The sequences that are now referred to \emph{Golay sequences} or \emph{Golay pairs} were first 
introduced by \cite{golay1949multi} in a groundbreaking paper on multislit spectrometry.
Later, \cite{golay1961complementary} made a detailed study of their elegant mathematical
properties and in this paper he referred to them as \emph{complementary series}:
\begin{quote}
Regardless of past or possible future applications, the writer has found these complementary
series mathematically appealing\dots
\end{quote}
Since then, Golay pairs and their generalizations have been widely studied
and applied to a huge and surprisingly varied number of problems in engineering.
For example, they have been applied to
encoding acoustic surface waves~\citep{tseng1971signal}, 
spread-spectrum systems~\citep{krone1984quadriphase}, 
optical time domain reflectometry~\citep{nazarathy1989real}, 
CDMA networks~\citep{seberry2002use}, 
medical ultrasounds~\citep{nowicki2003application}, 
train wheel detection~\citep{donato2004train}, 
radar systems~\citep{li2008construction}, 
and wireless networks \citep{lomayev2017golay}. 
Golay pairs consist of two sequences and the property that defines them
(roughly speaking) is the fact that one sequence's ``correlation''
with itself is the inverse of the other sequence's ``correlation'' with itself;
see Definition~\ref{def:cgp} in Section~\ref{sec:background} for
the formal definition.

Although Golay defined his complementary series over an alphabet of $\brace{\pm1}$, later authors have generalized
the alphabet to include nonreal roots of unity such as 
the fourth root of unity $i\coloneqq\sqrt{-1}$.
In this paper, we focus on the case where the alphabet
is $\brace{\pm1,\pm i}$. 
In this case the resulting sequence pairs
are sometimes referred to as \emph{quadriphase}, \emph{4-phase}, or \emph{quaternary} Golay pairs though we
will simply refer to them as \emph{complex} Golay pairs.
If a complex Golay pair of length $n$ exists then we say that $n$ is a \emph{complex Golay number}.

Complex Golay pairs have been extensively studied by many authors
in several different contexts.
Initially they were studied in the context of signal processing
where some researchers ran searches for ``{polyphase complementary codes}''
in the process of studying signal encoding methods.
In particular,
\cite{sivaswamy1978multiphase} found
a complex Golay pair of length~$3$ and \cite{frank1980polyphase}
found complex Golay pairs of length~$5$ and~$13$
and stated that $7$, $9$, $11$, $15$, and $17$ were not complex Golay numbers.
Complex Golay pairs were also introduced by \cite{C:JCMCC:1994}
(independently of the above works)
in order to expand the orders of Hadamard matrices attainable via ordinary Golay pairs---%
that are only known to exist in the lengths $2^{a+b+c}\cdot5^b\cdot13^c$
for integers $a,b,c\geq0$.
Craigen also proved that if $m$ and $n$ are complex Golay numbers
then $2mn$ is a complex Golay number.

An exhaustive search for complex Golay pairs up to length~$13$ was performed by \cite{HK:AJC:1994}.
This search verified the remark of \cite{frank1980polyphase}
that $7$ and $9$ are not complex Golay numbers but
found that $11$ \emph{is} in fact a complex Golay number
(the cause of the mistaken claim is unknown
since it was based on computer programs that were never published).
Later, an exhaustive search up to length~$19$ was performed by \cite*{CHK:DM:2002},
showing that $14$, $15$, $17$, and~$19$ are not complex Golay numbers.
In addition,
they reported
that~$21$ was not a complex Golay number
and conjectured that~$23$ was not a complex Golay number
based on a partial search for complex Golay pairs of length~$23$.

Another line of research is to provide explicit
constructions for complex Golay pairs and a number of results of this
form have been published as well.
\cite*{fiedler2008framework,fiedler2008multi} provide a construction that explains
the existence of all known complex Golay pairs whose lengths are a power
of~$2$,
including complex Golay pairs of length~$16$ discovered by \cite{li2005more} to
not fit into a construction given by \cite{davis1999peak}.
\cite{gibson2011quaternary} provide
a construction that explains the existence of all complex Golay pairs
up to length~$26$ and give a table that lists the total number of complex Golay pairs up to length~$26$.
This table was produced by the mathematician Frank Fiedler,
who describes his enumeration method in a subsequent paper~\citep{fiedler2013small}
where he also reports that~$27$ and~$28$ are not complex Golay numbers.

In this paper we give an enumeration method that can be used to
verify Fiedler's table giving the total number of complex Golay pairs
up to length $28$.
We implemented our method 
and obtained counts up to length~$28$ after
about $8.5$ months of CPU time.
The counts we obtain match those in Fiedler's table in each case, increasing the
confidence that the enumeration was performed without error.
In addition, we also provide counts
for the total number of complex Golay pairs up to well-known equivalence operations
and explicitly publish the sequences
on our website
\href{https://uwaterloo.ca/mathcheck}{\nolinkurl{uwaterloo.ca/mathcheck}}. 
To our knowledge, this is the first time 
that explicit complex Golay pairs (and their counts up to equivalence) have been
published for lengths larger than $19$.
Lastly, we publicly release our code
for enumerating complex Golay pairs
so that others may verify and reproduce our work; we were not able to find
any other code for enumerating complex Golay pairs that was publicly available.

Our result is of interest not only because of the verification we provide
but also because of the method we use to perform the verification.
The method proceeds in two stages.  In the first stage,
a fine-tuned computer program performs an exhaustive search
among all sequences that could possibly appear as the first
sequence in a complex Golay pair of a given length
(up to the equivalence defined in Proposition~\ref{prop:equivGolay} of Section~\ref{sec:background}).
Several filtering theorems that we
describe in Section~\ref{sec:background} allow us to discard almost all
sequences from consideration.  To apply these filtering theorems we use
functionality from the computer algebra system \textsc{Maple}~\citep{Maple10} and the
mathematical library FFTW~\citep{frigo2005design} as we describe in Section~\ref{sec:method}.
After this filtering is completed we have
a list of sequences of a manageable size such that 
the first sequence of every complex Golay pair of a given length
(up to equivalence) appears in the list.

In the second stage, we use the programmatic SAT solver~\textsc{MapleSAT}
developed by \cite{liang2017empirical}
to determine the sequences from the first stage (if any) that can be
paired up with another sequence to form a complex Golay pair.  A
programmatic SAT
instance is constructed from each sequence found in the first stage
such that the instance is satisfiable if and only if the
sequence is part of a complex Golay pair.  Furthermore, when the
instance is satisfiable the assignment produced by the SAT
solver determines the second sequence of a complex Golay pair.

The ``SAT+CAS'' method that 
we use is of interest in its own right because it links the two previously
separated fields of symbolic computation and satisfiability checking.
Recently there has been interest in combining methods from both fields
to solve computational problems as outlined in the
invited talk at ISSAC by~\cite{abraham2015building} and demonstrated by the
SC$^2$ (satisfiability checking + symbolic computation) project initiated by \cite{sc2}.  Our work fits into
this paradigm and to our knowledge is the first application of a SAT solver
to search for complex Golay pairs,
though previous work exists that uses
a SAT solver to search for other types of
complementary sequences
like those defining Williamson matrices
\citep{bright2016mathcheck,DBLP:journals/jar/ZulkoskiBHKCG17,brightthesis}.

A preliminary version of our algorithm and results
was presented at the conference ISSAC~\citep{bright2018enumeration}.
We have since made several improvements to our algorithm
that allow us to extend our exhaustive search
from all lengths $n\leq25$ 
to all lengths $n\leq28$.
We describe our implementation, give timings for our searches,
and compare the timings with those of our previous algorithm
in Section~\ref{sec:results}.
The new timings are about an order of magnitude faster
on the same hardware,
demonstrating the effectiveness of our improvements.
In particular, our new algorithm incorporates the following
updates:

\paragraph{More efficient filtering}
The preprocessing stage of our algorithm
uses a filtering condition (Corollary~\ref{cor:fiedler} of Section~\ref{sec:background})
to remove a large number of sequences from consideration.
In our original algorithm we applied this condition using
a large number of equally-spaced points $z$ along the unit circle.
In our new algorithm we apply this condition significantly
less often but filter approximately the same number
of sequences (and sometimes even more)
by showing how to only use the condition
for the $z$ that are the most likely to work.

\paragraph{More efficient joining}
The first stage of our algorithm requires searching
through two lists and finding elements of the first that
can be joined with elements of the second.  In our
previous algorithm this was done in a totally
brute-force manner, i.e., every item in the first list
was paired with every item of the second list to
see if they could be joined.
In our new algorithm we sort the lists using a custom
ordering and show how to find the
elements that can be joined via a number of linear
scans through the sorted lists (see Section~\ref{sec:matching}),
thereby eliminating
the need to consider all possible pairs.

\paragraph{Increased filtering}
Because the improved filtering method outlined above
is so much faster than the previous filtering method
we were able to add another round of filtering just
before generating the SAT instances (see Section~\ref{sec:optimization}).
The result is that approximately 40\% of the SAT instances
generated by our previous algorithm are now quickly filtered
before even calling a SAT solver. \\

Additionally we include more background details
in this version of the paper.  For example,
we add Section~\ref{sec:altdef} that gives an alternative
definition of complex Golay pairs that is often useful
and we show that this definition is equivalent with the first definition.

\section{Background on Complex Golay Pairs}\label{sec:background}

In this section we present the background necessary to describe our
method for enumerating complex Golay pairs.
First, we require some preliminary definitions
to define what complex Golay pairs are.
We let $\overline{z}$ denote the complex conjugate of $z$
(which is just the multiplicative inverse of $z$
for $z$ on the unit circle) and if $A$ is a sequence we let
$\overline{A}$ denote the sequence containing the conjugates
of $A$.

\begin{definition}[cf.~\cite{kotsireas2013algorithms}]
The \emph{nonperiodic autocorrelation function}
of a sequence $A = [a_0, \dotsc, a_{n-1}]$ of length $n$ is 
\[ N_A(s) \coloneqq \sum_{k=0}^{n-s-1} a_k\overline{a_{k+s}} \qquad\text{for}\qquad s=0,\dotsc, n-1 . \]
\end{definition}

\begin{definition}\label{def:cgp}
  A pair of sequences $(A,B)$ with $A$ and $B$ in $\{\pm1,\pm i\}^n$
  are called a \emph{complex Golay pair}
  if the sum of their nonperiodic autocorrelations is a constant zero for $s\neq0$, i.e.,
  \begin{equation*}
  N_A(s) + N_B(s) = 0 \qquad\text{for}\qquad s = 1, \dotsc, n-1 . 
  \end{equation*}
\end{definition}

Note that if $A$ and $B$ are in $\{\pm1,\pm i\}^n$ (as we assume throughout this paper)
then $N_A(0)+N_B(0)=2n$ by the definition of
the nonperiodic autocorrelation function
and the fact that
$z\overline{z}=1$ if $z$ is $\pm1$ or $\pm i$, explaining why $s\neq0$ in Definition~\ref{def:cgp}.
\begin{example}
$([1,1,-1], [1, i, 1])$ is a complex Golay pair of length $3$ since the
first sequence $A$ has autocorrelations $N_A(1)=0$ and $N_A(2)=-1$
and the second sequence $B$ has autocorrelations $N_B(1)=0$ and $N_B(2)=1$.
\end{example}

\subsection{Alternative definition}\label{sec:altdef}

Instead of viewing complex Golay pairs as pairs of \emph{sequences}
it is also possible to view them as pairs of \emph{polynomials}.
If $A$ is the sequence $[a_0,\dotsc,a_{n-1}]$ we let $A(z)$ denote the
\emph{Hall polynomial} $a_0+a_1z+\dotsc+a_{n-1}z^{n-1}$ which can be
viewed as the finite generating function of $A$.
This leads to the following alternative definition of complex Golay pairs.

\begin{definition}\label{def:altcgp}
A pair of polynomials $(A(z),B(z))$ that have degrees $n-1$ and coefficients in $\{\pm1,\pm i\}$
are called a \emph{complex Golay pair} if\/ $\abs{A(z)}^2 + \abs{B(z)}^2 = 2n$
for all $z$ on the unit circle.
\end{definition}

\begin{example}
$(1+z-z^2,1+iz+z^2)$ is a complex Golay pair of length $3$ since
\[ \abs[\big]{1+z-z^2}^2 + \abs[\big]{1+iz+z^2}^2 = 6 \]
for all $z$ on the unit circle.
\end{example}

These two definitions can be seen to be equivalent using the following lemma
that expresses the squared absolute values of a polynomial
evaluated on the unit circle in terms of the autocorrelations of the sequence
formed by the polynomial's coefficients.

\begin{lemma}\label{lem:normeval}
Let $A$ be a complex sequence of length $n$.  Then
\[ \abs{A(z)}^2 = N_A(0) + 2\Re\paren[\bigg]{\sum_{s=1}^{n-1}{N_A(s)z^{-s}}}. \]
for all $z$ on the unit circle.
\end{lemma}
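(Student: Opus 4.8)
The plan is to compute $\abs{A(z)}^2$ directly from $\abs{A(z)}^2 = A(z)\overline{A(z)}$ and then reorganize the resulting double sum so that the autocorrelations $N_A(s)$ emerge as the coefficients of the powers $z^{-s}$. First I would write $A(z) = \sum_{j=0}^{n-1} a_j z^j$ and, using that $\overline{z} = z^{-1}$ for $z$ on the unit circle, express $\overline{A(z)} = \sum_{k=0}^{n-1} \overline{a_k}\, z^{-k}$. Multiplying the two sums then gives
\[ \abs{A(z)}^2 = \sum_{j=0}^{n-1}\sum_{k=0}^{n-1} a_j \overline{a_k}\, z^{j-k}. \]

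The key step is to re-index this double sum by the difference $s = j-k$, which ranges over $-(n-1),\dotsc,n-1$, splitting it into three parts according to the sign of $s$. The diagonal $s=0$ contributes $\sum_{k} a_k\overline{a_k} = N_A(0)$, which is real. For $s > 0$ I would set $j = k+s$ and check that the range $0 \le k \le n-1-s$ makes the coefficient of $z^{s}$ equal to $\sum_{k=0}^{n-1-s} a_{k+s}\overline{a_k} = \overline{N_A(s)}$. For $s < 0$, writing $s=-t$ with $t>0$ and setting $k = j+t$, the coefficient of $z^{-t}$ works out to $\sum_{j=0}^{n-1-t} a_j \overline{a_{j+t}} = N_A(t)$. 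Collecting these yields
\[ \abs{A(z)}^2 = N_A(0) + \sum_{s=1}^{n-1}\overline{N_A(s)}\,z^{s} + \sum_{s=1}^{n-1} N_A(s)\,z^{-s}. \]

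Finally, I would pair the two remaining sums term by term. Since $z$ lies on the unit circle we have $z^{s} = \overline{z^{-s}}$, so $\overline{N_A(s)}\,z^{s} = \overline{N_A(s)\,z^{-s}}$ is exactly the complex conjugate of $N_A(s)\,z^{-s}$. Applying $w + \overline{w} = 2\Re(w)$ for each $s$ then collapses the two off-diagonal sums into $2\Re\paren[\big]{\sum_{s=1}^{n-1} N_A(s)\, z^{-s}}$, which is the claimed identity.

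The main obstacle I anticipate is purely the index bookkeeping in the re-indexing step: one must verify carefully that the summation ranges are correct after substituting $j=k+s$ (resp.\ $k=j+t$) and, in particular, track which of the two off-diagonal sums produces $N_A(s)$ and which produces its conjugate $\overline{N_A(s)}$. Once the ranges are pinned down, the observation that $z^{s}=\overline{z^{-s}}$ on the unit circle makes the conjugate pairing, and hence the appearance of the real part, essentially automatic.
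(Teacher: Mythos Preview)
Your proposal is correct and follows essentially the same route as the paper: expand $\abs{A(z)}^2=A(z)\overline{A(z)}$ as a double sum using $\overline{z}=z^{-1}$, re-index by the difference $s=j-k$ to recognise $N_A(s)$ and $\overline{N_A(s)}$ as the coefficients of $z^{-s}$ and $z^{s}$, and then pair conjugate terms via $w+\overline{w}=2\Re(w)$. Your write-up is in fact slightly more explicit than the paper's about the index ranges in the re-indexing step, which is exactly the point you flagged as the main obstacle.
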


\begin{proof}
Since $z$ is on the unit circle we have $\overline{z}=z^{-1}$, so $\abs{A(z)}^2=A(z)\overline{A(z)}=A(z)\overline{A}(z^{-1})$.
Expanding this, we obtain that $\abs{A(z)}^2$ is equal to
\[ \sum_{k=0}^{n-1}\sum_{l=0}^{n-1} a_k \overline{a_l} z^{k-l} . \]
Collecting terms of $z^s$ together for $s$ from $-n+1$ to $n-1$ this becomes
\[ \sum_{s=0}^{n-1}\paren[\bigg]{\sum_{j=s}^{n-1}a_j\overline{a_{j-s}}}z^s + \sum_{s=1}^{n-1}\paren[\bigg]{\sum_{j=s}^{n-1}a_{j-s}\overline{a_{j}}}z^{-s} \]
so the coefficient of $z^s$ for positive $s$ is $\overline{N_A(s)}$ and for negative $s$ is $N_A(s)$.
Using the fact that $\overline{N_A(s)}z^s=\overline{N_A(s)z^{-s}}$ 
this becomes
\[ N_A(0) + \sum_{s=1}^{n-1}\paren[\bigg]{\overline{N_A(s)z^{-s}} + N_A(s)z^{-s}} . \]
The desired result now follows from the fact that $z+\overline{z}=2\Re(z)$ for all complex $z$.
\end{proof}

For completeness we now demonstrate the equivalence of Definitions~\ref{def:cgp} and~\ref{def:altcgp}.

\begin{theorem}
A pair $(A,B)$ is a complex Golay pair in the
sense of Definition~\ref{def:cgp} if and only if
$(A(z),B(z))$ is a complex Golay pair in the sense
of Definition~\ref{def:altcgp}.
\end{theorem}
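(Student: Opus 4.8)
The plan is to reduce both definitions to statements about the autocorrelation sums via Lemma~\ref{lem:normeval}, which does essentially all the work. First I would apply the lemma to each of $A$ and $B$ separately and add the two identities, obtaining
\[ \abs{A(z)}^2 + \abs{B(z)}^2 = N_A(0)+N_B(0) + 2\Re\paren[\bigg]{\sum_{s=1}^{n-1}\paren[\big]{N_A(s)+N_B(s)}z^{-s}} \]
for all $z$ on the unit circle. Since $A,B\in\brace{\pm1,\pm i}^n$ we have $N_A(0)+N_B(0)=2n$, as noted just after Definition~\ref{def:cgp}, so the right-hand side equals $2n$ precisely when the real part of the inner sum vanishes.

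The forward direction is then immediate: if $(A,B)$ is a complex Golay pair in the sense of Definition~\ref{def:cgp}, then $N_A(s)+N_B(s)=0$ for every $s=1,\dots,n-1$, the inner sum is identically zero, and the displayed identity collapses to $\abs{A(z)}^2+\abs{B(z)}^2=2n$, which is exactly Definition~\ref{def:altcgp}.

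For the converse I would argue that the only way the real part can vanish for \emph{all} $z$ on the unit circle is for each coefficient $N_A(s)+N_B(s)$ to be zero. Writing $C_s \coloneqq N_A(s)+N_B(s)$ and using $\overline{z}=z^{-1}$ on the unit circle, the vanishing of $2\Re\paren[\big]{\sum_{s=1}^{n-1}C_s z^{-s}}$ can be rewritten as the Laurent polynomial identity $\sum_{s=1}^{n-1}\paren[\big]{C_s z^{-s}+\overline{C_s}z^{s}}=0$ holding at every point of the unit circle. Multiplying through by $z^{n-1}$ turns this into an ordinary polynomial of degree at most $2(n-1)$ with infinitely many roots, so it must be the zero polynomial; comparing coefficients then forces $C_s=0$ for every $s$, recovering Definition~\ref{def:cgp}.

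The main obstacle is this last coefficient-extraction step: one must justify that an identity holding only on the unit circle (rather than on all of $\C$) still forces the coefficients to vanish. The cleanest justification is the root-counting observation above, that a nonzero polynomial has only finitely many roots; alternatively one could substitute $z=e^{i\theta}$ and invoke the linear independence of the functions $\brace{\cos(s\theta),\sin(s\theta)}_{s\ge1}$. Everything else is routine bookkeeping on top of the already-established Lemma~\ref{lem:normeval}.
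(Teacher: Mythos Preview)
Your proposal is correct and follows essentially the same route as the paper: both directions are obtained by summing the identity of Lemma~\ref{lem:normeval} for $A$ and $B$, and for the converse both you and the paper argue that the resulting Laurent expression vanishing on the whole unit circle forces all coefficients $N_A(s)+N_B(s)$ to be zero. The only cosmetic difference is that you clear denominators by multiplying through by $z^{n-1}$ to get an ordinary polynomial with infinitely many roots, whereas the paper invokes the equivalent fact that a nonzero rational function has only finitely many zeros.
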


\begin{proof}
If $(A,B)$ is a complex Golay pair in the
sense of Definition~\ref{def:cgp} then
by Lemma~\ref{lem:normeval}
\[ \abs{A(z)}^2 + \abs{B(z)}^2 = N_A(0) + N_B(0) + 2\Re\paren[\bigg]{\sum_{s=1}^{n-1}{(N_A(s)+N_B(z))z^{-s}}} = 2n \]
for all $z$ on the unit circle, as required.

Conversely, if $(A(z),B(z))$ is a complex Golay pair in the sense of Definition~\ref{def:altcgp} then
by Lemma~\ref{lem:normeval} and subtracting off $N_A(0) + N_B(0) = 2n$ we have
\[ \sum_{s=1}^{n-1}{\paren[\Big]{N_A(s)z^{-s}+\overline{N_A(s)}z^{s}+N_B(z)z^{-s}+\overline{N_B(s)}z^{s}}} = 0 \]
for all $z$ on the unit circle.  Since any nonzero rational function has a finite number of zeros in $\C$
the function given on the left hand side must be identically zero.  
In particular, the coefficients of $z^{-s}$ for $s=1$, $\dotsc$, $n-1$ must be zero
and we derive $N_A(s)+N_B(s)=0$ for $s=1$, $\dotsc$, $n-1$ as required.
\end{proof}

\subsection{Equivalence operations}\label{subsec:equiv}

There are certain invertible operations that preserve the property of
being a complex Golay pair when applied to a sequence pair $(A,B)$.
These are summarized in the following proposition.

\begin{proposition}[cf.~\cite{CHK:DM:2002}]
\label{prop:equivGolay}
Let\/ $([a_0,\ldots,a_{n-1}]$, $[b_0,\ldots,b_{n-1}])$ be a complex
Golay pair. Then the following are also complex Golay pairs:
\begin{enumerate}
\item[E1.] (Reversal)\/ $([a_{n-1},\dotsc,a_0], [b_{n-1},\dotsc,b_0])$.
\item[E2.] (Conjugate Reverse $A$)\/ $([\overline{a_{n-1}},\ldots,\overline{a_0}], [b_0,\ldots,b_{n-1}])$.
\item[E3.] (Swap)\/ $([b_0,\dotsc,b_{n-1}], [a_0,\dotsc,a_{n-1}])$. 
\item[E4.] (Scale $A$)\/ $([ia_0,\dotsc,ia_{n-1}], [b_0,\dotsc,b_{n-1}])$. 
\item[E5.] (Positional Scaling)\/ $(i\star A, i\star B)$
where $c\star A$ denotes the sequence of coefficients of the polynomial $A(cz)$, i.e.,
$[a_0,ca_1,c^2a_2,\dotsc,c^{n-1}a_{n-1}]$.
\end{enumerate}
\end{proposition}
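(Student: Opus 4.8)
The plan is to verify each operation using the polynomial characterization of Definition~\ref{def:altcgp}, under which $(A,B)$ is a complex Golay pair exactly when $\abs{A(z)}^2 + \abs{B(z)}^2 = 2n$ at every point $z$ of the unit circle. For each of E1--E5 I would first record the Hall polynomial of the transformed sequence, then show that its squared modulus on the unit circle equals either $\abs{A(z)}^2$ (so the sum is unchanged pointwise) or $\abs{A(w)}^2$ for another point $w$ that is itself on the unit circle (so the sum is still $2n$ by applying the hypothesis at $w$). I would also check the easy but necessary fact that each operation keeps every entry inside the alphabet $\{\pm1,\pm i\}$, so that the output genuinely qualifies as a complex Golay pair; this is precisely why the scaling constant in E4 and E5 is taken to be $i$, and it uses that complex conjugation permutes $\{\pm1,\pm i\}$ in E2.

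The three multiplicative cases are immediate. E3 (Swap) leaves the sum $\abs{B(z)}^2 + \abs{A(z)}^2$ literally unchanged. E4 (Scale $A$) replaces $A(z)$ by $iA(z)$, and $\abs{i}=1$ gives $\abs{iA(z)}^2 = \abs{A(z)}^2$. E5 (Positional Scaling) replaces the pair by $\paren{A(iz),B(iz)}$, since $c\star A$ has Hall polynomial $A(cz)$; because $\abs{i}=1$ the point $iz$ lies on the unit circle whenever $z$ does, so the hypothesis applied at $iz$ yields $\abs{A(iz)}^2 + \abs{B(iz)}^2 = 2n$.

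The two reversal cases require a short polynomial identity. For E1 (Reversal) the reversed sequence has Hall polynomial $z^{n-1}A(1/z)$, whose modulus on the unit circle is $\abs{A(1/z)}$ because $\abs{z^{n-1}}=1$; as $z$ traverses the unit circle so does $w=1/z=\overline z$, and the hypothesis at $w$ handles both reversed polynomials at once. For E2 (Conjugate Reverse $A$) I would show that the conjugate-reversed polynomial $C(z)=\sum_{k=0}^{n-1}\overline{a_{n-1-k}}\,z^k$ satisfies $\abs{C(z)}^2=\abs{A(z)}^2$ on the unit circle. The cleanest route is to compute $\overline{C(z)}=\sum_{k} a_{n-1-k}z^{-k}$ using $\overline z = z^{-1}$, re-index by $j=n-1-k$ to get $\overline{C(z)}=z^{1-n}A(z)$, and conjugate to obtain $C(z)=z^{n-1}\overline{A(z)}$; then $\abs{z^{n-1}}=1$ gives the claim, leaving the sum unchanged.

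I expect no deep obstacle: everything rests on the two facts $\abs{z^{n-1}}=1$ and $1/z=\overline z$ for $z$ on the unit circle. The only place demanding real care is the bookkeeping in E2, where conjugation interacts with the reversal re-indexing $k\mapsto n-1-k$; once the identity $C(z)=z^{n-1}\overline{A(z)}$ is in hand, the whole proposition follows at once.
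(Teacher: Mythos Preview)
Your proposal is correct and follows essentially the same approach as the paper: both use the polynomial characterization of Definition~\ref{def:altcgp}, identify the Hall polynomial of each transformed sequence, and reduce each case to either $\abs{A(z)}$ being unchanged or to the hypothesis applied at another unit-circle point ($z^{-1}$ or $iz$). Your treatment of E2 reaches the same identity $C(z)=z^{n-1}\overline{A(z)}$ as the paper (which writes it as $z^{n-1}\overline{A}(z^{-1})$ and notes $\overline{A}(z^{-1})=\overline{A(z)}$ on the unit circle), and your explicit check that each operation preserves the alphabet $\{\pm1,\pm i\}$ is a small but welcome addition that the paper leaves implicit.
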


\begin{proof}Suppose $(A,B)$ is a complex Golay pair and let $z$ be on the unit circle.
\begin{enumerate}
\item[E1.] Note that the reverse of $A(z)$ is $z^{n-1}A(z^{-1})$ and $\abs{z^{n-1}}=1$.  Then
\[ \abs{z^{n-1}A(z^{-1})}^2 + \abs{z^{n-1}B(z^{-1})}^2 = \abs{A(z^{-1})}^2 + \abs{B(z^{-1})}^2 = 2n \]
since $(A,B)$ is a complex Golay pair and $z^{-1}$ is on the unit circle.
\item[E2.] The conjugate reverse of $A(z)$ is $z^{n-1}\overline{A}(z^{-1})$ and $\abs{z^{n-1}\overline{A}(z^{-1})}=\abs{\overline{A(z)}}=\abs{A(z)}$.
\item[E3.] $\abs{B(z)}^2+\abs{A(z)}^2=\abs{A(z)}^2+\abs{B(z)}^2=2n$.
\item[E4.] $A(z)$ scaled by $i$ is $iA(z)$ and $\abs{iA(z)}=\abs{A(z)}$.
\item[E5.] $\abs{A(iz)}^2+\abs{B(iz)}^2=2n$ since $(A,B)$ is a complex Golay pair and $iz$ is on the unit circle.
\end{enumerate}
\end{proof}

\begin{definition}
We call two complex Golay pairs $(A, B)$ and $(A', B')$ \emph{equivalent}
if $(A', B')$ can be obtained from
$(A, B)$ using the transformations described in Proposition~\ref{prop:equivGolay}.
\end{definition}

The next lemma provides some normalization conditions that can be used when searching
for complex Golay pairs up to equivalence.
Since all complex Golay pairs $(A',B')$ which are equivalent to
a complex Golay pair $(A,B)$ can easily
be generated from $(A,B)$, it suffices to search for complex Golay pairs up to equivalence.

\begin{lemma}[cf.~\cite{fiedler2013small}]\label{lem:normalize}
  Let\/ $(A', B')$ be a complex Golay pair.
  Then\/ $(A', B')$ is equivalent to a complex Golay pair\/
  $(A, B)$ with\/ $a_0=a_1=b_0=1$ and\/ $a_2\in\brace{\pm1, i}$.
\end{lemma}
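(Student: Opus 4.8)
The plan is to track each sequence entry through the equivalence operations by writing it as a power of~$i$: set $a_k = i^{\alpha_k}$ and $b_k = i^{\beta_k}$ with exponents $\alpha_k,\beta_k\in\Z/4\Z$, so that each relevant operation in Proposition~\ref{prop:equivGolay} acts additively on these exponents. In this notation E4 (Scale~$A$) sends $\alpha_k\mapsto\alpha_k+1$ for all $k$ while fixing the $\beta_k$, and E5 (Positional Scaling with $c=i$) sends $\alpha_k\mapsto\alpha_k+k$ and $\beta_k\mapsto\beta_k+k$. Two derived operations will also be needed. Applying E3, then E4, then E3 again yields a ``Scale~$B$'' operation $\beta_k\mapsto\beta_k+1$ that fixes the $\alpha_k$. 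And composing reversals with conjugate-reversals yields a pure ``conjugate both'' operation $\alpha_k\mapsto-\alpha_k$, $\beta_k\mapsto-\beta_k$: applying E3, then E2, then E3 conjugate-reverses $B$ only, so E2 followed by this conjugate-reverses both sequences, and a subsequent application of E1 cancels the order-reversal and leaves $(\overline A,\overline B)$.

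First I would normalize the three linear conditions in sequence, each step using an operation that preserves the conditions already achieved. Apply E4 between $0$ and $3$ times to force $\alpha_0=0$, that is, $a_0=1$. Next apply E5 the appropriate number of times to force $\alpha_1=0$; since E5 adds $0$ to the index-$0$ exponents, this drives $a_1$ to~$1$ while leaving $a_0=1$ and $b_0$ untouched. Finally apply the derived Scale~$B$ operation to force $\beta_0=0$, which changes none of the $\alpha_k$ and so preserves $a_0=a_1=1$. At this point $a_0=a_1=b_0=1$ and $a_2=i^{\alpha_2}$ for some $\alpha_2\in\Z/4\Z$, so $a_2\in\{1,i,-1,-i\}$.

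It remains to eliminate the case $a_2=-i$, and here lies the one genuine obstacle. Once $\alpha_0=\alpha_1=0$ are fixed, the only combination of E4 and E5 preserving both exponents is the trivial one—the net change in $\alpha_0$ equals the number of E4 steps, forcing it to be zero, and then the change in $\alpha_1$ equals the number of E5 steps—so the diagonal operations can no longer move $\alpha_2$ on their own, and a fresh operation is required. The resolution is the derived conjugation: ``conjugate both'' sends $\alpha_2\mapsto-\alpha_2$ while fixing every exponent equal to~$0$, hence it preserves $a_0=a_1=b_0=1$ and turns $a_2=-i$ (exponent~$3$) into $a_2=i$ (exponent~$1$). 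Thus if $a_2=-i$ I apply this final operation and otherwise do nothing, so in either case $a_2\in\{\pm1,i\}$. Since every step is a composition of the operations in Proposition~\ref{prop:equivGolay}, the resulting pair $(A,B)$ is equivalent to $(A',B')$ and meets all the stated normalization conditions.
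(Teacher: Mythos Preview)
Your proof is correct and follows essentially the same approach as the paper: use E4 to fix $a_0$, E5 to fix $a_1$, the derived ``scale $B$'' (E3--E4--E3) to fix $b_0$, and a conjugation-type compound operation to flip $a_2=-i$ to $a_2=i$. The only cosmetic differences are that the paper handles $a_2$ before $b_0$ and uses the shorter compound E1--E2 (yielding $(\overline{A},B^{\mathrm{rev}})$, which conjugates $A$ but merely reverses $B$) rather than your full ``conjugate both''; your choice has the small advantage of preserving $b_0=1$, which is why you can safely reorder the last two steps.
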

\begin{proof}
We will transform a given complex Golay sequence pair $(A', B')$ into 
an equivalent normalized one
using the equivalence operations of
Proposition~\ref{prop:equivGolay}.
To start with, let $A\coloneqq A'$ and $B\coloneqq B'$.

First, we ensure that $a_0=1$.  To do this, we apply operation E4 (scale $A$) enough times
until $a_0=1$.

Second, we ensure that $a_1=1$.  To do this, we apply operation E5 (positional scaling) enough times
until $a_1=1$; note that E5 does not change $a_0$.

Third, we ensure that $a_2\neq -i$.  If it is, we apply operation E1 (reversal) and E2 (conjugate reverse $A$)
which has the effect of keeping $a_0=a_1=1$ and setting $a_2=i$.

Last, we ensure that $b_0=1$.  To do this, we apply operation E3 (swap) and then operation E4 (scale $A$) enough times
so that $a_0=1$ and then operation E3 (swap) again.  This has the effect of not changing $A$ but setting $b_0=1$.
\end{proof}

\subsection{Filtering properties}

We now prove some useful properties that all complex
Golay pairs satisfy and that will be exploited by our algorithm
for enumerating complex Golay pairs.
The properties will be used as filtering criteria:
if a sequence does not satisfy them then we know it cannot
possibly be part of a complex Golay pair and may therefore
be filtered away.
The next well-known lemma is 
one of the most powerful results of this form.

\begin{lemma}[cf.~\cite{popovic1991synthesis}]\label{cor:filter}
Let $A$ be a complex sequence of length $n$.
If\/ $\abs{A(z)}^2 > 2n$ for some $z$ on the unit circle
then $A$ is not a member of a complex Golay pair.
\end{lemma}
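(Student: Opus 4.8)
The plan is to argue by contraposition: I will show that if $A$ is a member of a complex Golay pair, then $\abs{A(z)}^2 \le 2n$ holds for every $z$ on the unit circle, which immediately rules out the hypothesis $\abs{A(z)}^2 > 2n$. The key observation is that the polynomial formulation of complex Golay pairs (Definition~\ref{def:altcgp}) constrains the size of $\abs{A(z)}^2$ pointwise on the unit circle, so the bound falls out directly.

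First I would suppose that $A$ pairs off with some sequence $B$ to form a complex Golay pair. Invoking the equivalence of the two definitions established in the theorem above, $(A(z), B(z))$ is then a complex Golay pair in the sense of Definition~\ref{def:altcgp}, so $\abs{A(z)}^2 + \abs{B(z)}^2 = 2n$ for all $z$ on the unit circle. Rearranging gives $\abs{A(z)}^2 = 2n - \abs{B(z)}^2$.

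Next I would use the fact that $\abs{B(z)}^2$ is a squared modulus and hence a nonnegative real number for every $z$. Therefore $\abs{A(z)}^2 = 2n - \abs{B(z)}^2 \le 2n$ for all $z$ on the unit circle. Taking the contrapositive, if there exists a $z$ on the unit circle with $\abs{A(z)}^2 > 2n$, then no such $B$ can exist and $A$ cannot be a member of a complex Golay pair, as claimed.

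There is no genuine obstacle here; the result is an immediate consequence of the pointwise polynomial characterization together with the nonnegativity of $\abs{B(z)}^2$. The only point worth being careful about is making explicit the passage through Definition~\ref{def:altcgp} rather than working directly with the autocorrelation formulation of Definition~\ref{def:cgp}, since the bound $\abs{A(z)}^2 \le 2n$ is transparent in the polynomial picture but would require reassembling the autocorrelation sum via Lemma~\ref{lem:normeval} if one insisted on staying with the original definition.
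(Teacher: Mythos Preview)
Your proof is correct and follows essentially the same approach as the paper: both arguments invoke Definition~\ref{def:altcgp} and the nonnegativity of $\abs{B(z)}^2$, with the paper phrasing it as a direct contradiction and you phrasing it as the contrapositive.
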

\begin{proof}
Suppose the sequence $A$ was a member of a complex Golay pair whose other
member was the sequence $B$.
Since $\abs{B(z)}^2\geq0$, we must have $\abs{A(z)}^2+\abs{B(z)}^2>2n$,
in contradiction to Definition~\ref{def:altcgp}.
\end{proof}

\begin{example}
The sequence $A\coloneqq[1,1,1]$ cannot be a member of a complex Golay pair
since $\abs{A(1)}^2=9$ is larger than $2n=6$.
\end{example}

\cite{fiedler2013small} derives a variation of Lemma~\ref{cor:filter}
that is useful because it can be applied knowing only around half of the entries of $A$.
It is derived using the following variation of Lemma~\ref{lem:normeval}.
Let $\Aeven$ be identical to $A$ with the entries of odd index replaced by zeros and let $\Aodd$ be
identical to $A$ with the entries of even index replaced by zeros.

\begin{lemma}[cf.~\cite{fiedler2013small}]\label{lem:fiedler}
Let $A$ be a complex sequence of length $n$.  Then
\[ \abs{\Aeven(z)}^2 + \abs{\Aodd(z)}^2 = N_A(0) + 2\Re\paren[\Bigg]{\sum_{\substack{s=1\\\text{\rm$s$ even}}}^{n-1}{N_A(s)z^{-s}}}. \]
\end{lemma}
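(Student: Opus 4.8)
The plan is to apply Lemma~\ref{lem:normeval} separately to the two zero-padded sequences $\Aeven$ and $\Aodd$, each of which is a complex sequence of length $n$, and then add the resulting identities. Applied to $\Aeven$ this gives
\[ \abs{\Aeven(z)}^2 = N_{\Aeven}(0) + 2\Re\paren[\bigg]{\sum_{s=1}^{n-1}{N_{\Aeven}(s)z^{-s}}}, \]
and the analogous identity holds for $\Aodd$. Adding the two reduces the task to expressing the combined autocorrelation $N_{\Aeven}(s)+N_{\Aodd}(s)$ in terms of the autocorrelations of $A$ for each $s$ from $0$ to $n-1$.

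The key step is a parity analysis of the summand $a_k\overline{a_{k+s}}$ appearing in the autocorrelation. Writing the entries of $\Aeven$ as $a_k$ for $k$ even and $0$ otherwise, a term in $N_{\Aeven}(s)$ survives only when both $k$ and $k+s$ are even, which forces $s$ to be even; the same reasoning shows $N_{\Aodd}(s)$ is supported only on even $s$. I would conclude that for odd $s$ both padded autocorrelations vanish, so those terms drop out of the summed identity altogether. For even $s$, the even-indexed terms $a_k\overline{a_{k+s}}$ of $A$ land in $N_{\Aeven}(s)$ and the odd-indexed terms land in $N_{\Aodd}(s)$; since every index $k$ in the summation range has a definite parity, these recombine without overlap or omission into $N_{\Aeven}(s)+N_{\Aodd}(s)=N_A(s)$. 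Taking $s=0$ as the special case gives $N_{\Aeven}(0)+N_{\Aodd}(0)=N_A(0)$.

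Substituting these relations back into the summed identity eliminates every odd value of $s$ and replaces each surviving combined autocorrelation by $N_A(s)$, which is exactly the claimed formula. The main thing to get right is the bookkeeping in the parity argument: one must verify that the cross terms (those with $k$ and $k+s$ of opposite parity) are precisely the ones annihilated by the zero padding, and that splitting $A$ into its even- and odd-indexed parts neither double-counts nor loses any even-$s$ autocorrelation term. This is routine, but it is the point at which an off-by-one or a parity slip would most easily creep in.
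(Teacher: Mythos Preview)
Your proof is correct and follows essentially the same idea as the paper's: both rest on the observation that in the expansion of $\abs{\Aeven(z)}^2+\abs{\Aodd(z)}^2$ only the terms $a_k\overline{a_l}z^{k-l}$ with $k-l$ even survive, so the odd-$s$ contributions vanish and the even-$s$ contributions reassemble into $N_A(s)$. The only organizational difference is that you invoke Lemma~\ref{lem:normeval} as a black box on each of $\Aeven$ and $\Aodd$ and then combine their autocorrelations, whereas the paper expands the double sum directly and compares coefficients to Lemma~\ref{lem:normeval}; the underlying parity bookkeeping is identical.
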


\begin{proof}
The proof proceeds like in the proof of Lemma~\ref{lem:normeval} and we derive
\[ \abs{\Aeven(z)}^2 + \abs{\Aodd(z)}^2 = \sum_{\substack{k,l=0\\\text{$k,l$ even}}}^{n-1} a_k \overline{a_l} z^{k-l} + \sum_{\substack{k,l=0\\\text{$k,l$ odd}}}^{n-1} a_k \overline{a_l} z^{k-l} = \sum_{\substack{k,l=0\\k-l\text{ even}}}^{n-1} a_k \overline{a_l} z^{k-l} . \]
From this we see that the coefficient on $z^s$ for odd $s$ is zero and the coefficient on $z^s$ for even $s$
is the same as in Lemma~\ref{lem:normeval} from which the result follows.
\end{proof}

\begin{corollary}\label{cor:fiedler}
Let $A$ be a complex sequence of length $n$.
If\/ $\abs{\Aeven(z)}^2$ or $\abs{\Aodd(z)}^2$ is strictly larger than $2n$
for some $z$ on the unit circle
then $A$ is not a member of a complex Golay pair.
\end{corollary}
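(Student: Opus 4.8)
The plan is to derive the result by combining Lemma~\ref{lem:fiedler} with the same positivity argument used to prove Lemma~\ref{cor:filter}. First I would observe that the right-hand side of the identity in Lemma~\ref{lem:fiedler} involves only the even-indexed autocorrelations $N_A(s)$, which for a complex Golay pair $(A,B)$ are constrained by $N_A(s)+N_B(s)=0$. The key point is that the \emph{sum} $\abs{\Aeven(z)}^2+\abs{\Aodd(z)}^2$ is bounded above by $2n$ for any $z$ on the unit circle whenever $A$ is part of a complex Golay pair, and since each of the two summands is nonnegative, each individual summand must also be at most $2n$; contrapositively, if either summand exceeds $2n$ then $A$ cannot be part of such a pair.

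The concrete steps would be as follows. Suppose for contradiction that $A$ is a member of a complex Golay pair, with partner sequence $B$. Applying Lemma~\ref{lem:fiedler} to $A$ gives
\[ \abs{\Aeven(z)}^2 + \abs{\Aodd(z)}^2 = N_A(0) + 2\Re\paren[\Bigg]{\sum_{\substack{s=1\\\text{$s$ even}}}^{n-1}{N_A(s)z^{-s}}}, \]
and applying it to $B$ gives the analogous identity with $A$ replaced by $B$. Adding these two identities and using the complex Golay conditions $N_A(s)+N_B(s)=0$ for the even $s$ in the range, the entire sum over $s$ cancels, leaving
\[ \abs{\Aeven(z)}^2 + \abs{\Aodd(z)}^2 + \abs{\Beven(z)}^2 + \abs{\Bodd(z)}^2 = N_A(0) + N_B(0) = 2n. \]
Since the three terms $\abs{\Aodd(z)}^2$, $\abs{\Beven(z)}^2$, and $\abs{\Bodd(z)}^2$ are all nonnegative, it follows that $\abs{\Aeven(z)}^2 \le 2n$; by symmetry $\abs{\Aodd(z)}^2 \le 2n$ as well. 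This contradicts the hypothesis that one of them is strictly larger than $2n$, so $A$ cannot belong to a complex Golay pair.

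There is no serious obstacle here, since this is essentially a mirror of Lemma~\ref{cor:filter} with the full-polynomial identity replaced by the even/odd-split identity. The one point requiring minor care is confirming that $N_A(0)=n$ (and likewise for $B$), which holds because every entry lies in $\brace{\pm1,\pm i}$ so that $a_k\overline{a_k}=1$; this is exactly the fact already noted after Definition~\ref{def:cgp} that gives $N_A(0)+N_B(0)=2n$. With that in hand the inequality $\abs{\Aeven(z)}^2+\abs{\Aodd(z)}^2\le 2n$ follows immediately and the corollary is just the contrapositive.
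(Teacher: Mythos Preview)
Your proof is correct and follows essentially the same approach as the paper: derive the identity $\abs{\Aeven(z)}^2 + \abs{\Aodd(z)}^2 + \abs{\Beven(z)}^2 + \abs{\Bodd(z)}^2 = 2n$ from Lemma~\ref{lem:fiedler} applied to both $A$ and $B$, then use nonnegativity of the squared moduli. The paper's proof simply asserts this identity ``by Lemma~\ref{lem:fiedler}'' without spelling out the addition and cancellation step, whereas you make that derivation explicit.
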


\begin{proof}
If $(A,B)$ is a complex Golay pair then by Lemma~\ref{lem:fiedler}
we have
\[ \abs{\Aeven(z)}^2 + \abs{\Aodd(z)}^2 + \abs{\Beven(z)}^2 + \abs{\Bodd(z)}^2 = 2n \]
but if either $\abs{\Aeven(z)}^2>2n$
or $\abs{\Aodd(z)}^2>2n$ then this cannot hold.
\end{proof}

\begin{example}
The sequence $A\coloneqq[1,x,1,y,1,z,1]$ cannot be a member of a complex Golay pair
(regardless of the values of $x$, $y$, and $z$)
since $\abs{\Aeven(1)}^2=16$ is larger than $2n=14$.
\end{example}

\subsection{Sum-of-squares decomposition types}\label{subseq:decomp}

The next lemma is useful because it allows us to write $2n$
as the sum of four integer squares.  It is stated
by \cite{HK:AJC:1994} using a different notation; we use the notation
$\resum(A)$ and $\imsum(A)$ to represent the real and imaginary parts of
the sum of the entries of $A$.
For example, if $A\coloneqq[1,i,-i,i]$ then $\resum(A)=\imsum(A)=1$. 

\begin{lemma}[cf.~\cite{HK:AJC:1994}]\label{lem:decomp}
If\/ $(A, B)$ is a complex Golay pair then
\[ {\resum(A)}^2 + {\imsum(A)}^2 + {\resum(B)}^2 + {\imsum(B)}^2 = 2n . \]
\end{lemma}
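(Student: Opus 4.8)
The plan is to recognize the left-hand side as the sum of the squared absolute values of the two Hall polynomials evaluated at the single point $z=1$, and then to invoke the polynomial form of the complex Golay condition (Definition~\ref{def:altcgp}) at that point. The whole argument is then a one-line specialization.

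First I would observe that the entry-sum of $A$ is exactly its Hall polynomial evaluated at $z=1$: if $A=[a_0,\dotsc,a_{n-1}]$ then $A(1)=\sum_{k=0}^{n-1}a_k$. By the very definition of $\resum$ and $\imsum$ as the real and imaginary parts of this sum, we have $A(1)=\resum(A)+i\,\imsum(A)$, and hence $\abs{A(1)}^2=\resum(A)^2+\imsum(A)^2$. The identical computation for $B$ yields $\abs{B(1)}^2=\resum(B)^2+\imsum(B)^2$. Adding the two, the quantity to be evaluated is precisely $\abs{A(1)}^2+\abs{B(1)}^2$.

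Next I would note that $z=1$ lies on the unit circle, so the defining identity of Definition~\ref{def:altcgp} applies at $z=1$ and gives $\abs{A(1)}^2+\abs{B(1)}^2=2n$. Combining this with the reformulation of the previous step establishes the claimed equality $\resum(A)^2+\imsum(A)^2+\resum(B)^2+\imsum(B)^2=2n$.

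I do not anticipate any real obstacle: the only point requiring care is the bookkeeping identification of $\resum(A)^2+\imsum(A)^2$ with $\abs{A(1)}^2$ (and likewise for $B$), since after that the statement is just the polynomial definition evaluated at one convenient point. In particular, no appeal to the autocorrelation lemmas is needed; the sum-of-four-squares structure emerges simply because $z=1$ forces the two complex numbers $A(1)$ and $B(1)$ to contribute their real and imaginary parts separately.
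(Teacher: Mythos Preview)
Your proposal is correct and is essentially identical to the paper's own proof: both evaluate Definition~\ref{def:altcgp} at $z=1$, identify $A(1)=\resum(A)+i\,\imsum(A)$ (and similarly for $B$), and expand the squared absolute values.
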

\begin{proof}
Using Definition~\ref{def:altcgp} with $z=1$ we have
\[ \abs{\resum(A)+\imsum(A)i}^2 + \abs{\resum(B)+\imsum(B)i}^2 = 2n . \]
Since $\abs{\resum(X)+\imsum(X)i}^2 = \resum(X)^2 + \imsum(X)^2$ the result follows.
\end{proof}

A consequence of Lemma~\ref{lem:decomp} is
that every complex Golay pair generates a decomposition of $2n$ into a sum of four integer squares.
In fact, it typically generates several decompositions of $2n$ into a sum of four squares.
Recall that $i\star A$ denotes positional scaling by~$i$ (operation E5) on the sequence~$A$.
If $(A,B)$ is a complex Golay pair then applying operation E5 to this pair $k$ times 
shows that $(i^k\star A,i^k\star B)$ 
is also a complex Golay pair.  
By using Lemma~\ref{lem:decomp} on these complex Golay pairs
one obtains the fact that $2n$ can be decomposed as the sum of 
four integer squares as
\begin{equation*}
\resum\paren{i^k\star A}^2 + \imsum\paren{i^k\star A}^2 + \resum\paren{i^k\star B}^2 + \imsum\paren{i^k\star B}^2 .
\end{equation*}
For $k>3$ this produces no new decompositions but in general
for $k=0$, $1$, $2$, and~$3$ this produces four distinct
decompositions of $2n$ into a sum of four squares.

With the help of a computer algebra system (CAS) one can
enumerate every possible way that $2n$ may be written as a sum of four integer squares.  For example,
when $n=23$ one has $0^2+1^2+3^2+6^2=2\cdot23$ and $1^2+2^2+4^2+5^2=2\cdot23$ as well as all
permutations of the squares and negations of the integers being squared.
During the first stage of our enumeration method only the first sequence of a complex Golay pair
is known, so at that stage we cannot compute its whole
sums-of-squares decomposition.
However, it is still possible to filter some sequences from consideration based on analyzing
the two known terms in the sums-of-squares decomposition.

For example, say that $A$ is the
first sequence in a potential complex Golay pair of length $23$ with $\resum(A)=0$ and $\imsum(A)=5$.
We can immediately discard $A$ from consideration because 
there is no way to chose the $\resum$ and $\imsum$ of $B$ to complete the sums-of-squares
decomposition of $2n$, i.e., there are no integer solutions $(x,y)$ of $0^2+5^2+x^2+y^2=2n$.

\section{Enumeration Method}\label{sec:method}

In this section we describe in detail the method we used to
perform a complete enumeration of all complex Golay pairs up to length~$28$.
Given a length $n$ our goal is to find all $\{\pm1,\pm i\}$ sequences $A$ and $B$
of length $n$ such that $(A,B)$ is a complex Golay pair.

\subsection{Preprocessing: Enumerate possibilities for $\Aeven$ and $\Aodd$}\label{sec:preproc}

The first step of our method uses Fiedler's trick of considering
the entries of $A$ of even index separately from the entries of $A$ of odd index.
There are approximately $n/2$ nonzero entries in each of $\Aeven$ and $\Aodd$
and there are four possible values for each nonzero entry.  Therefore there
are approximately $2\cdot4^{n/2}=2^{n+1}$ possible sequences to check in this
step.  Additionally, by Lemma~\ref{lem:normalize} we may assume the first nonzero
entry of both $\Aeven$ and $\Aodd$ is $1$ and that the second nonzero
entry of $\Aeven$ is not $-i$, decreasing the number of sequences
to check in this step by more than a factor of $4$.
It is quite feasible to perform a brute-force search
through all such sequences when $n\approx30$.

We apply Corollary~\ref{cor:fiedler} to every possibility for $\Aeven$ and $\Aodd$.
There are an infinite number of~$z$ on the unit circle so
it is not possible to apply Corollary~\ref{cor:fiedler} using all such~$z$.
One simple approach 
is to try a sufficiently large number of points~$z$ so that 
when a point exists with $\abs{A'(z)}^2>2n$
(where $A'$ is either $\Aeven$ or $\Aodd$) such a point is usually discovered.
For example, \cite{bright2018enumeration} tested Corollary~\ref{cor:fiedler}
for $2^{14}$ equally-spaced points~$z$ around the unit circle.

Instead, in our implementation we use a nonlinear programming method
to estimate the maximum of $\abs{A'(z)}^2$ for $z$ on the unit circle.
This can be done with the \textsc{NLPSolve} command of the
computer algebra system \textsc{Maple}
though for efficiency we use a custom C implementation
of a variant of the quadratic interpolation
method described by \cite{sun2006optimization}.

We write $\abs{A'(z)}^2$ in terms of the real variable
$\theta$ by using the substitution $z=e^{i\theta}$ and define $f(\theta)\coloneqq\abs{A'(e^{i\theta})}^2$.
The quadratic interpolation method to estimate the maximum of $f(\theta)$ over $0\leq\theta<2\pi$ proceeds as follows:
\begin{enumerate}
\item Evaluate $f$ at the $2^7$ points $\theta_0$, $\dotsc$, $\theta_{127}$ where $\theta_k\coloneqq\frac{2\pi k}{128}$
and let $f_k\coloneqq f(\theta_k)$.
If $f_k>2n$ for some $k$ we can immediately filter $A'$ by Corollary~\ref{cor:fiedler}.
\item 
For every value of $f$ that is larger than its neighbours (i.e.,
values of $f_k$ that satisfy $f_{k-1}\leq f_k$ and $f_k\geq f_{k+1}$)
we use interpolation to find the quadratic polynomial that passes through the points
$(\theta_{k-1},f_{k-1})$, $(\theta_{k},f_{k})$, and $(\theta_{k+1},f_{k+1})$.
Let $\theta^*$ be the value of $\theta$ that maximizes the quadratic polynomial;
this can be computed to be exactly
\[ \frac{1}{2}\cdot\frac{f_{k-1}(\theta_k^2-\theta_{k+1}^2)+f_k(\theta_{k+1}^2-\theta_{k-1}^2)+f_{k+1}(\theta_{k-1}^2-\theta_k^2)}{f_{k-1}(\theta_k-\theta_{k+1})+f_k(\theta_{k+1}-\theta_{k-1})+f_{k+1}(\theta_{k-1}-\theta_k)} \]
and let $f^*\coloneqq f(\theta^*)$.
\item Note that $f^*$ is often a better approximation to a local maximum of $f$ than
$f_k$ was, and if $f^* > 2n$ then we can filter $A'$ by Corollary~\ref{cor:fiedler}.
Otherwise we can use the point $(\theta^*,f^*)$ to derive a tighter interval in
which a local maximum of $f$ must lie.
For example, if $\theta_k<\theta^*<\theta_{k+1}$ and $f^*>f_k$ then we
can repeat the previous step except using the points $(\theta_k,f_k)$, $(\theta^*,f^*)$,
and $(\theta_{k+1},f_{k+1})$.
\end{enumerate}

One can use this method to derive more and more accurate approximations
to the local maxima of $f$ though this is mostly unnecessary for our purposes
as we only care about finding a single value for $\theta$ with $f(\theta)>2n$.
A good approximation to a local maximum was usually
found after a single interpolation step so in our implementation
we would move on to looking for another local maximum of $f$
after repeating the interpolation step three times.
If all values of $f_k$ were examined and no
points $\theta$ were found with $f(\theta)>2n$ then
we save $A'$ as a sequence that could not be filtered.

At the conclusion of this step we have two lists: one list $\Leven$
of the $\Aeven$ that were not discarded and one list $\Lodd$ of the $\Aodd$
that were not discarded.

\subsection{Stage 1: Enumerate possibilities for $A$}\label{sec:matching}

We now enumerate all possibilities for $A$ by joining all
possibilities for $\Aeven$ with all possibilities for $\Aodd$.
The most straightforward way of doing this would be to simply try
all $A_1\in\Lodd$ and $A_2\in\Leven$; this was done by \cite{bright2018enumeration}.
However, because both $\Leven$ and $\Lodd$ can contain millions
of sequences it can be inefficient to try all possible pairings
$(A_1,A_2)$.
However, many pairings can be eliminated
by using the conditions implied by Lemma~\ref{lem:decomp};
we now show how to find all possible
pairings that satisfy these conditions without
needing to try all $A_1\in\Lodd$ and $A_2\in\Leven$.

Let $(u_0,u_1,u_2,u_3)$ be an arbitrary quadruple of integers.
We will describe how to efficiently find all $A$
whose entries of odd index are in $\Lodd$, whose entries of even index
are in $\Leven$, and that satisfy
\[ \resum(A)=u_0, \quad \imsum(A)=u_1, \quad \resum(i\star A)=u_2, \quad \imsum(i\star A)=u_3 . \tag{$1$}\label{eq:U} \]
For each $A_1\in\Lodd$ we form the vector 
\[ V_1 \coloneqq (\resum(A_1),\imsum(A_1),\resum(i\star A_1),\imsum(i\star A_1)) \]
and for each $A_2\in\Leven$ we form the vector 
\[ V_2 \coloneqq (u_0-\resum(A_2),u_1-\imsum(A_2),u_2-\resum(i\star A_2),u_3-\imsum(i\star A_2)) . \]
We now show that the $A$ that satisfy relationship~\eqref{eq:U} are exactly those formed
by the $A_1$ and $A_2$ with $V_1=V_2$.

\begin{lemma}\label{lem:matching}
Let\/ $A_1\in\Lodd$ and\/ $A_2\in\Leven$ and let\/ $A$ be the sequence
that satisfies the relationship\/ $A(z)=A_1(z)+A_2(z)$, i.e., $A$ is a sequence
of length\/ $n$ with\/ $\{\pm1,\pm i\}$ entries.
Then\/ $A$ satisfies~\eqref{eq:U} if and only if the vectors\/ $V_1$ and\/ $V_2$
defined as above satisfy\/ $V_1=V_2$.
\end{lemma}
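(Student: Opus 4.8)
The plan is to reduce the four scalar conditions collected in~\eqref{eq:U} to four coordinatewise equalities by exploiting the fact that each of the four functionals appearing there is \emph{additive} with respect to the decomposition of $A$ into its odd-index and even-index parts. First I would observe that since $A_1\in\Lodd$ has zeros in every even position and $A_2\in\Leven$ has zeros in every odd position, the two sequences have disjoint supports; consequently the sequence whose Hall polynomial is $A_1(z)+A_2(z)$ is obtained simply by overlaying them, so $A$ is the entrywise sum $A_1+A_2$ and its nonzero entries are precisely the nonzero entries of $A_1$ together with those of $A_2$. This identifies $A$ unambiguously as a length-$n$ sequence with $\{\pm1,\pm i\}$ entries.

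Next I would record four additivity identities. Because the sum of the entries of $A$ equals the sum of the entries of $A_1$ plus the sum of the entries of $A_2$, taking real and imaginary parts gives $\resum(A)=\resum(A_1)+\resum(A_2)$ and $\imsum(A)=\imsum(A_1)+\imsum(A_2)$. The only point requiring genuine care, and hence the main obstacle, is the analogous claim for the positionally scaled functionals: I must check that $i\star$ respects the even/odd splitting. This holds because positional scaling multiplies the entry in position $k$ by the factor $i^k$, which depends only on the index $k$ and not on whether that entry originated in $A_1$ or $A_2$; hence $i\star A=(i\star A_1)+(i\star A_2)$ entrywise. Applying $\resum$ and $\imsum$ to this identity yields $\resum(i\star A)=\resum(i\star A_1)+\resum(i\star A_2)$ together with the corresponding identity for $\imsum$.

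With these four additivity identities in hand the conclusion is immediate. The condition $\resum(A)=u_0$ rearranges to $\resum(A_1)=u_0-\resum(A_2)$, which is exactly the equality of the first coordinates of $V_1$ and $V_2$; likewise the conditions $\imsum(A)=u_1$, $\resum(i\star A)=u_2$, and $\imsum(i\star A)=u_3$ match the second, third, and fourth coordinate equalities one-for-one, using the definitions of $V_1$ and $V_2$. Therefore all four conditions of~\eqref{eq:U} hold simultaneously if and only if $V_1=V_2$, which is the assertion of the lemma.
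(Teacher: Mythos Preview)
Your proof is correct and follows essentially the same approach as the paper's own proof: both use the additivity of $\resum$, $\imsum$, $\resum\circ(i\star{-})$, and $\imsum\circ(i\star{-})$ over the decomposition $A=A_1+A_2$ to rewrite each condition in~\eqref{eq:U} as the equality of the corresponding coordinates of $V_1$ and $V_2$. If anything, you are slightly more careful than the paper in explicitly justifying that $i\star A=(i\star A_1)+(i\star A_2)$, which the paper subsumes under a single ``Similarly''.
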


\begin{proof}
Consider the first entry of $V_1$ and $V_2$.  If they are equal, we have
\[ \resum(A_1) = u_0 - \resum(A_2) . \]
This implies that $\resum(A)=u_0$ since $\resum(A)=\resum(A_1)+\resum(A_2)$.
Similarly, we derive $\imsum(A)=u_1$, $\resum(i\star A)=u_2$, and $\imsum(i\star A)=u_3$,
as required.  Conversely, $\resum(A)=u_0$ implies that the first entries of $V_1$
and $V_2$ are equal and the other equations from~\eqref{eq:U} imply that
their other entries are also equal.
\end{proof}

Therefore we have translated the problem of finding the sequences $A$ that satisfy~\eqref{eq:U}
into the problem of finding $A_1\in\Lodd$ and $A_2\in\Leven$
that have matching vectors $V_1$ and~$V_2$.
We can efficiently solve this problem using a string sorting
algorithm as described by (for example) \cite*{kotsireas2009weighing}.
First, we sort the list $\Lodd$ so that the
vectors $V_1$ formed above appear in lexicographically increasing
order when iterating through $L_1\in\Lodd$.  Similarly, we sort
$\Leven$ in the same way so that the $V_2$ appear in lexicographically
increasing order.

We can then find all $V_1$ that match with $V_2$ by a linear
scan through the lists $L_1$ and~$L_2$
and this requires only one pass
through each list.
For example, if at some point we find that $V_1$ is lexicographically greater
than $V_2$ then we try the next $L_2$ in the list $\Leven$ and if
we instead find that $V_2$ is lexicographically greater than $V_1$
then we try the next $L_1$ in the list $\Lodd$.
If instead we find $V_1=V_2$ we iterate through the $L_1$ and $L_2$
that share the same value of $V_1$ and $V_2$ and save the $A$
formed by joining all such $A_1$ and $A_2$ in a new list~$\LA$.

It remains to determine the appropriate values of $(u_0,u_1,u_2,u_3)$
to use in~\eqref{eq:U}.
To do this we use a quadratic Diophantine equation solver and find all solutions
of the Diophantine equation
\[ u^2 + v^2 + x^2 + y^2 = 2n \qquad \text{in integers $u$, $v$, $x$, $y$} . \]
Let $U$ be the set of all pairs $(u,v)$ for which
this equation is solvable.
By Lemma~\ref{lem:decomp} and operation~E5
we know that if $A$ is a complex Golay pair
then $(\resum(A),\imsum(A))$ and $(\resum(i\star A),\imsum(i\star A))$
must be members of $U$.
Therefore, we apply the above joining procedure
for all $((u_0,u_1),(u_2,u_3))\in U^2$.

At the conclusion of this stage we will have a list of sequences $\LA$
that could potentially be a member of a complex Golay pair.
By construction, the first member of all complex Golay pairs
(up to the equivalence described in Lemma~\ref{lem:normalize})
of length $n$ will be in $\LA$.
To decrease the size of $\LA$ even further we perform
additional filtering before adding sequences into $\LA$,
for example, we ensure that
\[ \paren[\big]{\resum(-1\star A),\imsum(-1\star A)} \quad\text{and}\quad \paren[\big]{\resum(-i\star A),\imsum(-i\star A)} \]
are both in $U$ for each $A$ added to $\LA$.  We also
filter those $A$ with $\abs{A(z)}^2 > 2n$
for some~$z$ on the unit circle (see Section~\ref{sec:optimization}
for optimization details).

\subsection{Stage 2: Construct the second sequence $B$ from $A$}\label{sec:stage2}

In the second stage we take as input the list $\LA$ generated
in the first stage, i.e., a list of the sequences $A$ that were not filtered by
any of the filtering theorems we applied.  For each $A\in\LA$ we attempt to
construct a second sequence $B$ such that $(A,B)$ is a complex Golay pair.
We do this by generating a SAT instance that encodes the property of
$(A,B)$ being a complex Golay pair where
the entries of $A$ are known and the entries
of $B$ are unknown and encoded using Boolean variables.
Because there are four possible values for each entry of $B$
we use two Boolean variables to encode each entry. 
Although the exact encoding used is arbitrary, we fixed the following
encoding in our implementation, where the variables $v_{2k}$ and $v_{2k+1}$
represent $b_k$, the $k$th entry of $B$:
\[ \begin{array}{@{\;\;}c@{\;\;}c@{\;\;}|@{\;\;}c@{\;\;}}
v_{2k} & v_{2k+1} & b_k \\ \hline
\F & \F & 1 \\
\F & \T & -1 \\
\T & \F & i \\
\T & \T & -i
\end{array} \]

To encode the property that $(A,B)$ is a complex Golay pair in our SAT instance
we add the conditions that define $(A,B)$ to be a complex Golay pair, i.e.,
\begin{equation*}
N_A(s) + N_B(s) = 0 \qquad\text{for}\qquad s = 1, \dotsc, n-1 . 
\end{equation*}
These equations could be encoded using clauses in conjunctive normal form
(for example by constructing
logical circuits to perform complex multiplication and addition and then
converting those circuits into CNF clauses).  However, we found that a much
more efficient and convenient method was to use a \emph{programmatic} SAT solver.

The concept of a programmatic SAT solver was first introduced
by \cite{ganesh2012lynx} where a programmatic SAT solver was
shown to be more efficient than a standard SAT solver when solving
instances derived from RNA folding problems.
More recently, a programmatic SAT solver was also shown to be
useful when searching for Williamson matrices and good matrices by \cite{bright2018sat,aaai2019preprint}.
Generally, programmatic SAT solvers perform well when there is additional
domain-specific knowledge known about the problem being solved.
Often this knowledge cannot easily be encoded into a SAT instance directly
but can be given to a programmatic SAT solver
to help guide the solver in its search.

Concretely, a programmatic SAT solver is compiled with a piece
of code that encodes a property any solution
must satisfy.  Periodically the SAT solver will run this code
while performing its search, and if the current partial
assignment violates a property that is expressed in the provided
code then a conflict clause is generated encoding this fact.
The conflict clause is added to the SAT solver's database
of learned clauses where it is used to increase the efficiency
of the remainder of the search.
The reason that these clauses can be so useful is because they can
encode facts that the SAT solver would have no way of learning
otherwise, since the SAT solver has no knowledge of the domain of
the problem.

Not only does this paradigm allow the SAT solver to perform its
search more efficiently, it also allows instances
to be much more expressive.  Under this framework SAT instances do not
have to consist solely of Boolean formulas in conjunctive normal
form (the typical format of SAT instances) but can consist of
clauses in conjunctive normal form combined with a piece of
code that \emph{programmatically} expresses clauses.
Increased expressiveness is also a feature of SMT (SAT modulo theories) solvers,
though SMT solvers typically require additional overhead
and only support a fixed number of theories such as those
specified in the SMT library~\citep{BarFT-SMTLIB}.  Additionally,
one can compile \emph{instance-specific} programmatic SAT solvers
that are tailored to perform searches for a specific class of
problems.

For our purposes we use a programmatic
SAT solver tailored to search for
sequences~$B$ that when paired with a given sequence~$A$
form a complex Golay pair.
Each instance will contain the $2n$ variables $v_0$, $\dotsc$, $v_{2n-1}$
that encode the entries of $B$ as previously specified.
In detail, the code given to the SAT solver does the following:
\begin{enumerate}
\item Compute and store the values $N_A(k)$ for $k=1$, $\dotsc$, $n-1$.
\item Initialize $s$ to $n-1$.  This will be a variable that
controls which autocorrelation condition we are currently examining.
\item Examine the current partial assignment to $v_0$, $v_1$,
$v_{2n-2}$, and $v_{2n-1}$.  If all these values have been assigned
then we can determine the values of $b_0$ and $b_{n-1}$.
From these values we compute
$N_B(s) = b_0\overline{b_{n-1}}$.
If $N_A(s)+N_B(s)\neq0$ then $(A,B)$ cannot be a complex Golay pair
(regardless of the values of $b_1$, $\dotsc$, $b_{n-2}$) and therefore
we learn a conflict clause saying that $b_0$ and $b_{n-1}$ cannot
both be assigned to their current values.  More explicitly,
if $v_k^{\text{cur}}$ represents the literal $v_k$ when $v_k$
is currently assigned to true and the literal $\lnot v_k$ when
$v_k$ is currently assigned to false
we learn the clause
\[ \lnot v_0^{\text{cur}}\lor\lnot v_1^{\text{cur}}\lor\lnot v_{2n-2}^{\text{cur}}\lor\lnot v_{2n-1}^{\text{cur}} \]
that says that at least one of $\{v_0, v_1, v_{2n-2}, v_{2n-1}\}$ must have their value changed.
\item Decrement $s$ by $1$ and repeat the previous step,
computing $N_B(s)$ if the all the $b_k$ that appear
in its definition have known values.
If $N_A(s)+N_B(s)\neq0$ then learn a clause preventing the values
of $b_k$ that appear in the definition of $N_B(s)$ from being assigned
the way that they currently are.  Continue to repeat this step until $s=0$.
\item If all values of $B$ are assigned but no clauses
have been learned in steps~3--4 then output the complex Golay pair $(A,B)$.
If an exhaustive search is desired, learn a clause preventing
the values of $B$ from being assigned the way they currently are;
otherwise learn nothing and return control to the SAT solver.
\end{enumerate}
For each $A$ in the list $\LA$ from stage~1 we run a SAT solver with the above
programmatic code. The list of all outputs $(A,B)$ in step~5 shown
above now form a complete list of complex Golay pairs of length $n$
up to the equivalence given in Lemma~\ref{lem:normalize}.  In fact,
since Lemma~\ref{lem:normalize} says that we can set $b_0=1$ we can
assume that both $v_0$ and $v_1$ are always set to false.
In other words, we can add the two unit clauses $\lnot v_0$
and $\lnot v_1$ into our SAT instance without omitting any
complex Golay pairs up to equivalence.

\subsection{Postprocessing: Enumerating all complex Golay pairs}

At the conclusion of the second stage we have obtained a list of complex
Golay pairs of length $n$ such that every complex Golay pair of length $n$
is equivalent to some pair in our list.  However, because we have not
accounted for all the equivalences in Section~\ref{subsec:equiv} some pairs
in our list may be equivalent to each other.  In some sense such pairs
should not actually be considered distinct, so to count how many distinct
complex Golay pairs exist in length $n$ we would like
to find and remove pairs that are equivalent from the list.
Additionally, to verify the counts given by \cite{gibson2011quaternary}
it is necessary to produce a list that contains
\emph{all} complex Golay pairs.  We now describe an algorithm that
does both, i.e., it produces a list of all complex Golay pairs as 
well as a list of all inequivalent complex Golay pairs.

In detail, our algorithm performs the following steps:
\begin{enumerate}
\item Initialize $\Omega_\text{all}$ to be the empty set.  This variable will be a set that will be
populated with and eventually contain all complex Golay pairs of length~$n$.
\item Initialize $\Omega_{\text{inequiv}}$ to be the empty set.  This variable
will be a set that will be populated with and eventually contain all inequivalent
complex Golay pairs of length~$n$.
\item For each $(A,B)$ in the set of complex Golay pairs generated in stage 2:
\begin{enumerate}
\item If $(A,B)$ is already in $\Omega_{\text{all}}$ then skip this $(A,B)$
and proceed to the next pair.
\item Initialize $\Gamma$ to be the set containing $(A,B)$.
This variable will be a set that will be populated with and eventually
contain all complex Golay pairs equivalent to $(A,B)$.
\item For every $\gamma$ in $\Gamma$ add
$\operatorname{E1}(\gamma)$, $\dotsc$, $\operatorname{E5}(\gamma)$ to $\Gamma$.
Continue to do this until every pair in $\Gamma$ has been examined and no new
pairs are added to $\Gamma$.
\item Add $(A,B)$ to $\Omega_{\text{inequiv}}$ and add all pairs in $\Gamma$ to
$\Omega_{\text{all}}$.
\end{enumerate}
\end{enumerate}
Additionally, pseudocode for this algorithm is given in Algorithm~\ref{alg}.
\begin{algorithm}
\begin{algorithmic}[1]
\State $\Omega_\text{all}\coloneqq\emptyset$, $\Omega_\text{inequiv}\coloneqq\emptyset$
	\For{$\omega\in\{\,\text{list of generated complex Golay pairs}\,\}\setminus\Omega_\text{all}$}
		\State $\Gamma\coloneqq\{\omega\}$
		\For{$\gamma\in\Gamma$}
			\State $\Gamma\coloneqq\Gamma\cup\{\operatorname{E1}(\gamma),\dotsc,\operatorname{E5}(\gamma)\}$
		\EndFor
		\State $\Omega_\text{inequiv}\coloneqq\Omega_\text{inequiv}\cup\{\omega\}$
		\State $\Omega_\text{all}\coloneqq\Omega_\text{all}\cup\Gamma$
	\EndFor
\end{algorithmic}
\caption{Pseudocode for our postprocessing algorithm.
Note that the sets being iterated through will update during each iteration.}\label{alg}
\end{algorithm}

After running this algorithm listing the members of $\Omega_{\text{all}}$
gives a list of all complex Golay pairs of length~$n$ and
listing the members of $\Omega_{\text{inequiv}}$
gives a list of all inequivalent complex Golay pairs of length~$n$.
At this point we can also construct the complete list of sequences
that appear in any complex Golay pair of length~$n$.
To do this it suffices to add $A$ and $B$ to a new
set $\Omega_{\text{seqs}}$ for each $(A,B)\in\Omega_{\text{all}}$.

\subsection{Optimizations}\label{sec:optimization}

Although the method described will correctly enumerate all complex Golay
pairs of a given length $n$, for the benefit of potential implementors
we mention a few optimizations that we found helpful.

In the preprocessing step and stage~1 it is necessary to evaluate
a polynomial at points on the unit circle and determine its squared
absolute value.
The fastest way we found to do this used the discrete Fourier transform (DFT).
For example, let $A'$ be the sequence $\Aeven$, $\Aodd$, or $A$
under consideration but padded with trailing zeros
so that $A'$ is of length~$N$.
By definition of the discrete Fourier transform we have that
the $j$th entry of $\DFT(A')$
is exactly $A'(z_j)$ where $z_j\coloneqq\exp(2\pi ij/N)$.
Thus, we determine the values of $\abs{A'(z_j)}^2$ by taking
the squared absolute values of the entries of $\DFT(A')$.
If $\abs{A'(z)}^2>2n$ for some~$z$ then by Lemma~\ref{cor:filter}
or Corollary~\ref{cor:fiedler}
we can discard $A'$ from consideration.  To guard against potential
inaccuracies introduced by the algorithms used to compute the DFT we
actually ensure that $\abs{A'(z)}^2>2n+\epsilon$ for some tolerance~$\epsilon$
that is small but larger than the accuracy of the DFT
(e.g., $\epsilon=10^{-3}$).

In stage 1 we solve the quadratic Diophantine equation
$u^2 + v^2 + x^2 + y^2 = 2n$ in integers $u$, $v$, $x$, $y$.
In fact, we can also add the constraints
\[ u+v\equiv n \pmod{2} \quad\text{and}\quad x+y\equiv n\pmod{2} \]
(the second is unnecessary, being implied by the first) because of the following lemma.
\begin{lemma}
Suppose\/ $R$ and\/ $I$ are the $\resum$ and $\imsum$ of a sequence\/
$A\in\brace{\pm1,\pm i}^n$.  Then\/ $R+I\equiv n\pmod{2}$.
\end{lemma}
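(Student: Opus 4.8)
The plan is to track the contribution of each individual entry of $A$ to the two quantities $R=\resum(A)$ and $I=\imsum(A)$ and then reduce everything modulo $2$. The key observation is that every entry of $A$ is either real (an element of $\{\pm1\}$) or purely imaginary (an element of $\{\pm i\}$), so each entry contributes $\pm1$ to exactly one of $R$ and $I$ while contributing $0$ to the other.

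First I would introduce counts: let $p$, $q$, $r$, and $s$ denote the number of entries of $A$ equal to $1$, $-1$, $i$, and $-i$ respectively, so that $p+q+r+s=n$. Taking the real and imaginary parts of the sum of the entries of $A$ then gives $R=p-q$ and $I=r-s$, and hence $R+I=p-q+r-s$.

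The final step is a parity reduction. Since $-q\equiv q$ and $-s\equiv s\pmod{2}$, we obtain $R+I\equiv p+q+r+s=n\pmod{2}$, which is exactly the claim. Equivalently, one can observe directly that $R+I$ is a sum of $n$ terms each equal to $\pm1$, and since every such term is congruent to $1$ modulo $2$, the whole sum is congruent to $n$ modulo $2$.

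There is no genuine obstacle here; the only point requiring mild care is recognizing that the mixed real/imaginary nature of the alphabet still yields a clean parity statement, precisely because negation does not affect parity and because each entry feeds into exactly one of the two sums. No appeal to any earlier result in the paper is needed.
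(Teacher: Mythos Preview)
Your proof is correct and essentially identical to the paper's own argument: both count the entries of each type, write $R+I$ as the alternating sum of those counts, and then reduce modulo~$2$ using $-1\equiv 1$ to obtain $p+q+r+s=n$. The only cosmetic difference is notation (you write $p,q,r,s$ where the paper writes $\#_1,\#_{-1},\#_i,\#_{-i}$).
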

\begin{proof}
Let $\#_c$ denote the number of entries in $A$ with value $c$.  Then
\[ R + I = (\#_1 - \#_{-1}) + (\#_{i} - \#_{-i}) \equiv \#_1 + \#_{-1} + \#_{i} + \#_{-i} \pmod{2} \]
since $-1\equiv1\pmod{2}$.
The quantity on the right is $n$ since there are $n$ entries in $A$.
\end{proof}

In the final filtering step of stage~1 we ensure that Diophantine equations of the form
$R^2 + I^2 + x^2 + y^2 = 2n$
are solvable in integers $(x,y)$ where $R$ and $I$ are given.
This can be efficiently done by precomputing
a Boolean two dimensional array
$D$ such that $D_{\abs{R},\abs{I}}$ is true if and only if this equation 
has a solution, making the check for solvability a fast lookup.
At this point we also check if we can find a $z$ on the unit circle with $\abs{A(z)}^2>2n$.
Because the joining process is the bottleneck of our algorithm it is
important to make this filtering step as efficient as possible;
we did this by
computing the values of $\abs{A(z)}^2$ via the expression
$\abs{A_1(z)+A_2(z)}^2$ and precomputing the values of $A_1(z)$ and $A_2(z)$
for $A_1\in\Lodd$ and $A_2\in\Leven$ 
on points of the form $z=\exp(2\pi i j/32)$ with $j=0,\dotsc,31$.

We found that it was more efficient
to not check the condition for each $j$ in ascending order (i.e., for each~$z$
in ascending complex argument)
but to first perform the check on points $z$ with larger spacing between them.
In our implementation we checked the condition for $z$ of the form $\exp(2\pi ij/N)$
with $j$ odd and $N=8$, $16$, and $32$ in that order.
(This ignores checking the
condition when $z\in\{\pm1,\pm i\}$ but that is desirable since
$\abs{A(i^k)}^2=\resum(i^k\star A)^2+\imsum(i^k\star A)^2$
and the sums-of-squares condition is a strictly stronger filtering method.)

The downside of precomputing the values of $A_1(z)$ and $A_2(z)$
for $A_1\in\Lodd$ and $A_2\in\Leven$ is that when the lists $\Lodd$
and $\Leven$ are large this requires a significant amount of
memory.  In our implementation storing the values of $A_2(z)$
for all $A_2\in\Leven$ in the
lengths $27$ and $28$ each required about 8GB of RAM, the
searches in lengths $25$ and $26$ each required about 2GB of RAM,
and the searches in lengths $23$ and $24$ each required about 0.5GB of RAM.
However, the amount of precomputation could be increased or decreased
depending on the amount of memory available.

In stage~1 we make a pass through the lists $\Lodd$ and $\Leven$
for each valid possibility of the values $(u_0,u_1,u_2,u_3)$.
However, it is possible to reuse some work between passes.  For example,
it is only necessary to sort the lists $\Lodd$ and $\Leven$ once.
Even though the values of $V_2$ depend on the values $(u_0,u_1,u_2,u_3)$
the relative ordering of $\Leven$ is unaffected since all $V_2$ are
shifted by the same amount in each pass.  It is even possible to
do some parts of the passes simultaneously: for example, if the
first coordinate of $V_1$ is greater than the first coordinate of $V_2$
for some $u_0$ then $V_1$ is lexicographically greater than $V_2$
in all passes with that value of $u_0$ and so we can iterate to the
next $L_2$ in all such passes.

At the conclusion of stage~1 we added an additional round of filtering
to the sequences in $\LA$ that was not in our previous work.
Since this step was not
a bottleneck we used a more involved filtering process here;
in our implementation we ensured that $\abs{A(z)}^2\leq2n$
for the $2^{10}$ points of the form $z_j=\exp(2\pi ij/2^{10})$.  Additionally,
we keep track of the local maxima found by this process and use the
quadratic interpolation method described in Section~\ref{sec:preproc}
to find more accurate approximations to the local maxima of $\abs{A(z)}^2$.

In stage 2 one can also include properties that complex Golay sequences
must satisfy in the SAT instances.
As an example of this, we state the following proposition that
was published by \cite{bright2018enumeration} at ISSAC 2018.
\begin{proposition}\label{prop:prod}
Let\/ $(A,B)$ be a complex Golay pair.  Then
\[ a_k a_{n-k-1} b_k b_{n-k-1} = \pm 1 \qquad\text{for}\qquad\text{$k=0$, $\dotsc$, $n-1$}. \]
\end{proposition}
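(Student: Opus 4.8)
The plan is to reduce the statement to a parity claim about the ``phases'' of the entries. Since each of $a_k$, $a_{n-k-1}$, $b_k$, $b_{n-k-1}$ is a fourth root of unity, their product is again a fourth root of unity, and it equals $\pm1$ precisely when its square equals $1$. Writing $\chi(z)\coloneqq z^2$ for a fourth root of unity $z$ (so $\chi(z)=1$ when $z=\pm1$ and $\chi(z)=-1$ when $z=\pm i$), I note that $\chi$ is multiplicative and that $\chi(\overline z)=\overline{z^2}=z^2=\chi(z)$ because $z^2$ is real. Thus it suffices to prove $P(k)\coloneqq\chi(a_k)\chi(a_{n-k-1})\chi(b_k)\chi(b_{n-k-1})=1$ for every $k$.

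The key tool is a counting observation: if a sum of fourth roots of unity equals zero, then the number of summands lying in $\{\pm i\}$ is even. Indeed, separating real and imaginary parts, the real summands $\pm1$ must cancel and the imaginary summands $\pm i$ must cancel, and a sum of an odd number of $\pm1$'s cannot vanish; equivalently, the product of the squares of all the summands is $1$. I would apply this to the complex Golay condition $N_A(s)+N_B(s)=0$ from Definition~\ref{def:cgp} at the shift $s=n-1-t$, where $N_A(s)=\sum_{k=0}^{t}a_k\overline{a_{k+n-1-t}}$ and similarly for $B$. This exhibits $2(t+1)$ fourth roots of unity summing to zero, so applying the observation and using $\chi(a_k\overline{a_{k+n-1-t}})=\chi(a_k)\chi(a_{k+n-1-t})$ gives $\prod_{k=0}^{t}\chi(a_k)\chi(a_{k+n-1-t})\chi(b_k)\chi(b_{k+n-1-t})=1$.

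Next I would rewrite this product in terms of the $P(k)$. As long as $t$ is small enough that the index sets $\{0,\dots,t\}$ and $\{n-1-t,\dots,n-1\}$ are disjoint, every index occurs exactly once, and since $\{k+n-1-t : 0\le k\le t\}=\{n-1-k : 0\le k\le t\}$, commutativity lets me regroup the product as $\prod_{k=0}^{t}P(k)=1$. Running this for $t=0,1,2,\dots$ and taking successive quotients gives a short induction: $P(0)=1$, and $\prod_{k=0}^{t}P(k)=\prod_{k=0}^{t-1}P(k)=1$ forces $P(t)=1$. Finally, the reflection symmetry $P(k)=P(n-1-k)$ (immediate from the definition of $P$) extends the conclusion to all remaining indices; the only index not covered this way is the central one $k=(n-1)/2$ when $n$ is odd, where $a_k a_{n-k-1} b_k b_{n-k-1}=a_k^2 b_k^2=\pm1$ holds trivially.

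The main obstacle to watch is the bookkeeping at the boundary of the disjoint range. I must verify that the induction reaches far enough to cover every $k$ up to $\lfloor(n-1)/2\rfloor$ before the two index sets begin to overlap (the threshold being $2t<n-1$), and that the symmetry $P(k)=P(n-1-k)$ together with the trivial central case then exhausts all $k$. This is the one place where the parity of $n$ must be tracked carefully: for even $n$ the disjoint range already reaches $\lfloor(n-1)/2\rfloor$, whereas for odd $n$ it stops one short and the self-paired central term must be dispatched separately. Everything else is routine.
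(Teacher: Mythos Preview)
Your proof is correct and follows essentially the same approach as the paper's: the paper uses additive notation (writing $a_k=i^{c_k}$ and working modulo~$2$) where you use the multiplicative $\chi(z)=z^2$, and both arguments telescope the parity identity obtained from the autocorrelation equation at shift $s=n-1-t$. One remark: your disjointness restriction is in fact unnecessary, since the regrouping $\prod_{k=0}^{t}\chi(a_k)\chi(a_{k+n-1-t})=\prod_{k=0}^{t}\chi(a_k)\chi(a_{n-1-k})$ holds for every $t$ by pure commutativity (both sides equal $\prod_{k=0}^{t}\chi(a_k)\cdot\prod_{j=n-1-t}^{n-1}\chi(a_j)$), so the induction already reaches $t=n-2$ and the separate treatment of the central index can be dropped.
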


To prove this, we use the following simple lemma.
\begin{lemma}\label{lem:sumsofi}
Let\/ $c_k\in\Z_4$ for\/ $k=0$, $\dotsc$, $n-1$.  Then
\[ \sum_{k=0}^{n-1}i^{c_k}=0 \qquad\text{implies}\qquad \sum_{k=0}^{n-1} c_k\equiv0\pmod{2}. \]
\end{lemma}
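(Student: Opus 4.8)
The plan is to prove the statement directly by counting how many of the exponents $c_k$ take each possible value in $\Z_4$. First I would introduce, for each $j\in\brace{0,1,2,3}$, the count $m_j\coloneqq\abs{\brace{k : c_k=j}}$, so that $m_0+m_1+m_2+m_3=n$. Since $i^0=1$, $i^1=i$, $i^2=-1$, and $i^3=-i$, the hypothesis $\sum_{k=0}^{n-1}i^{c_k}=0$ can be rewritten as
\[ (m_0-m_2)+(m_1-m_3)i = 0 . \]

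Next I would separate this equation into its real and imaginary parts. Because a complex number vanishes exactly when both its real and imaginary parts vanish, the equation above forces $m_0=m_2$ and, crucially for the conclusion, $m_1=m_3$.

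Finally I would compute the sum of the exponents modulo $2$. Writing
\[ \sum_{k=0}^{n-1}c_k = m_1+2m_2+3m_3 \equiv m_1+m_3 \pmod{2} \]
and substituting $m_1=m_3$ yields $\sum_{k=0}^{n-1}c_k\equiv 2m_1\equiv 0\pmod{2}$, as required.

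I do not anticipate a genuine obstacle here: once the four values of $i^{c_k}$ are made explicit the result reduces to a short finite counting identity. The only point meriting a moment's care is noticing that the conclusion depends solely on the imaginary part of the vanishing sum (the condition $m_1=m_3$); the real-part condition $m_0=m_2$, while also true, is not actually needed.
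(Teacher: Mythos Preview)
Your proof is correct and is essentially identical to the paper's own argument: the paper also introduces the counts $\#_c$ (your $m_c$), deduces $\#_0=\#_2$ and $\#_1=\#_3$ from the vanishing of the real and imaginary parts, and then computes $\sum c_k=\#_1+2\#_2+3\#_3=4\#_1+2\#_2\equiv0\pmod{2}$. Your observation that only the condition $m_1=m_3$ is actually needed is a nice remark and matches the paper's computation, which likewise uses only that equality.
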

\begin{proof}
Let $\#_c$ denote the number of $c_k$ with value $c$.
Note that the sum on the left implies that $\#_0=\#_2$ and $\#_1=\#_3$
because the $1$s must cancel with the $-1$s and the $i$s must cancel with the $-i$s.
Then
$\sum_{k=0}^{n-1}c_k=\#_1+2\#_2+3\#_3=4\#_1+2\#_2\equiv0\pmod{2}$.
\end{proof}
We now prove Proposition~\ref{prop:prod}.
\begin{proof}
Let $c_k$, $d_k\in\Z_4$ be such
that $a_k=i^{c_k}$ and $b_k=i^{d_k}$.  Using this notation
the multiplicative equation from Proposition~\ref{prop:prod} becomes
the additive congruence
\[ c_k + c_{n-k-1} + d_k + d_{n-k-1} \equiv 0 \pmod{2} . \label{eq:star}\tag{$2$} \]
Since $(A,B)$ is a complex Golay pair, the autocorrelation equations give us
\[ \sum_{k=0}^{n-s-1}\paren[\Big]{i^{c_k-c_{k+s}}+i^{d_k-d_{k+s}}} = 0 \]
for $s=1$, $\dotsc$, $n-1$.  Using Lemma~\ref{lem:sumsofi} and the fact that
$-1\equiv1\pmod{2}$ gives
\[ \sum_{k=0}^{n-s-1}\paren[\big]{c_k + c_{k+s} + d_k + d_{k+s}} \equiv 0 \pmod{2} \]
for $s=1$, $\dotsc$, $n-1$.
With $s=n-1$ (or $s=1$) one immediately derives~\eqref{eq:star} for $k=0$.
Adding together these congruences for $s=n-1$ and $s=n-2$ 
derives~\eqref{eq:star} for $k=1$.
The congruences for $s=n-2$ and $s=n-3$ give~\eqref{eq:star} for $k=2$ and
proceeding in this manner one derives~\eqref{eq:star} for all $k$.
\end{proof}

In short, Proposition~\ref{prop:prod} tells us that an even number of
$a_k$, $a_{n-k-1}$, $b_k$, and $b_{n-k-1}$ are real for each $k=0$, $\dotsc$, $n-1$.
For example, if exactly one of $a_k$ and $a_{n-k-1}$ is real then exactly
one of $b_k$ and $b_{n-k-1}$ must also be real.
In this case, using our encoding from
Section~\ref{sec:stage2} we add the two binary clauses
\[ v_{2k}\lor v_{2(n-k-1)} \qquad\text{and}\qquad \lnot v_{2k}\lor \lnot v_{2(n-k-1)} \]
to our SAT instance.
These clauses say that exactly one of $v_{2k}$ and $v_{2(n-k-1)}$ is true.
Conversely, if an even number of $a_k$ and $a_{n-k-1}$ are real then
an even number of $b_k$ and $b_{n-k-1}$ must also be real.  In this case we
add the two binary clauses
\[ v_{2k}\lor \lnot v_{2(n-k-1)} \qquad\text{and}\qquad \lnot v_{2k}\lor v_{2(n-k-1)} \]
to our SAT instance.

\section{Results}\label{sec:results}

In order to provide a verification of the counts given by \cite{fiedler2013small},
\cite{gibson2011quaternary}, and \cite*{CHK:DM:2002}
we implemented the enumeration method described in Section~\ref{sec:method}.
The preprocessing step was performed by a C program
and used the mathematical library FFTW by \cite{frigo2005design} for computing
the values of $f_0$, $\dotsc$, $f_{127}$ as described in Section~\ref{sec:optimization}. 
Stage~1 was performed by a C++ program,
used FFTW for computing the values of $A_1(z)$ and $A_2(z)$,
and a \textsc{Maple} script by \cite{nsoks}
for determining the solvability of the Diophantine equations 
given in Section~\ref{sec:stage2}.  Stage~2 was performed
by the programmatic SAT solver \textsc{MapleSAT} by \cite{liang2017empirical}.
The postprocessing step was performed by a Python script.

\begin{table}\begin{center}\begin{tabular}{cccc}
& \multicolumn{3}{c}{Total CPU Time in hours} \\
$n$ & Preproc. & Stage~1 & Stage~2 \\ \hline
17         & 0.00       & 0.00       & 0.00       \\ 
18         & 0.00       & 0.01       & 0.01       \\ 
19         & 0.00       & 0.01       & 0.01       \\ 
20         & 0.00       & 0.08       & 0.04       \\ 
21         & 0.00       & 0.71       & 0.13       \\ 
22         & 0.00       & 0.88       & 0.15       \\ 
23         & 0.01       & 4.67       & 0.19       \\ 
24         & 0.01       & 7.09       & 1.24       \\ 
25         & 0.03       & 97.86      & 1.37       \\ 
26         & 0.06       & 234.43     & 4.72       \\ 
27         & 0.12       & 3255.56    & 49.56      \\
28         & 0.22       & 2543.34    & 25.73      
\end{tabular}\end{center}
\caption{The time used to run the various stages of our algorithm in lengths $17\leq n\leq28$.}
\label{tbl:timings}\end{table}

We ran our implementation on a cluster of machines running CentOS 7 and
using Intel Xeon E5-2683V4 processors running at $2.1$~GHz and
using at most 8.5GB of RAM.
The lengths up to $22$ were run on a single core,
the lengths $23$ and $24$ were run on 10 cores,
the lengths $25$ and $26$ were run on 100 cores, and the lengths
$27$ and $28$ were run on 1000 cores.
The work was parallelized across~$N$ cores by partitioning
the list $\Lodd$ into~$N$ sublists of approximately equal size.
Everything in the stages proceeded exactly as before
except that in stage~1 the list $\Lodd$
was about~$N$ times shorter than it would have been otherwise.
In practice, this allowed us to complete the search nearly~$N$ times faster;
for example, our search in length $28$ required about 3.5 months of CPU time
but completed in about $3$ hours when distributed across 1000 cores.
The timings for the preprocessing step and the two stages of our algorithm are
given in Table~\ref{tbl:timings}; the timings for the postprocessing step
were negligible.  The times are given as the total amount of CPU time used
across all cores.
Our code is available online as a part of the \textsc{MathCheck} project
(see \href{https://uwaterloo.ca/mathcheck}{\nolinkurl{uwaterloo.ca/mathcheck}})
and the resulting enumeration of complex Golay pairs 
is available from \url{https://doi.org/10.5281/zenodo.1246337}.

The sizes of the lists $\Leven$ and $\Lodd$ computed in the preprocessing step
and the size of the list $\LA$ computed in stage~1
are given in Table~\ref{tbl:listsizes} for all lengths $n\leq28$.
Without applying any filtering $\LA$ would have size $4^n$ so
Table~\ref{tbl:listsizes} demonstrates the power of the criteria we used to
perform filtering; for example in length $28$ less than $10^{-8}\%$
of the possible sequences $A$ were added to $\LA$.
A result of \cite{gersho1979coefficient} implies that for~$z$ on the unit circle
the maximum value of $\abs{A(z)}^2$
is $n\log n + O(n\log\log n)$
for almost all $\brace{\pm1,\pm i}$-polynomials~$A$ of degree~$n$.
In particular, the number of polynomials that do not satisfy
this is at most $4^n/(\log n)^4$, implying that
the size of $\LA$ will be $o(4^n)$
though in practice we even have $\abs{\LA}\leq2^n$.

The generated SAT instances had $2n$ variables
(encoding the entries $b_0$, $\dotsc$, $b_{n-1}$),
two unit clauses (encoding $b_0=1$),
$2\floor{n/2}$ binary clauses (encoding Proposition~\ref{prop:prod}), and
$n-1$ programmatic constraints (encoding Definition~\ref{def:cgp}).
Once the values of
$b_k$ are known for $k=0$, $\dotsc$, $\floor{n/2}$,
the programmatic constraints uniquely determine the remaining values of $b_k$.
Thus $4^{n/2}$ is a crude upper bound on the number
of possible assignments to the values of $b_0$, $\dotsc$, $b_{n-1}$.

We now give an upper bound on the complexity of the parts of our algorithm that do
not run in polynomial time.  The preprocessing and postprocessing costs
are dominated by the costs of stage~1 and~2. The cost of stage~1
is at most $4^n$, an upper bound on the number of generated SAT instances.
The cost of stage~2 is difficult to specify rigorously since it may
depend on the specific SAT solver that is used.  However, there are
at most~$4^{n/2}$ possible assignments for each instance, so using a solver
that checks each assignment once gives an overall cost of $4^{3n/2}$.
The running time is better in practice, though seemingly still exponential.

Finally, we provide counts of the total number of complex Golay pairs of
length $n\leq28$ in Table~\ref{tbl:counts}.
The sizes of $\Omega_{\text{seqs}}$ match those given
by \cite{fiedler2013small} in all cases,
the sizes of $\Omega_{\text{all}}$
match those given by \cite{gibson2011quaternary} for all $n\leq26$
(the largest length they included)
and the sizes of $\Omega_{\text{inequiv}}$
match those given by \cite{CHK:DM:2002}
for all $n\leq19$ (the largest length they exhaustively solved).

\begin{table}\begin{center}\begin{tabular}{cccc}
$n$ & $\abs{\Leven}$ & $\abs{\Lodd}$ & $\abs{\LA}$ \\ \hline
1          & 1          & $-$        & 1          \\ 
2          & 3          & 1          & 3          \\ 
3          & 3          & 1          & 1          \\ 
4          & 3          & 4          & 3          \\ 
5          & 12         & 4          & 5          \\ 
6          & 12         & 16         & 14         \\ 
7          & 39         & 16         & 12         \\ 
8          & 48         & 64         & 36         \\ 
9          & 153        & 64         & 44         \\ 
10         & 153        & 204        & 118        \\ 
11         & 561        & 252        & 99         \\ 
12         & 645        & 860        & 445        \\ 
13         & 2121       & 884        & 279        \\ 
14         & 2463       & 3284       & 294        \\ 
15         & 8340       & 3572       & 1650       \\ 
16         & 9087       & 12116      & 829        \\ 
17         & 31275      & 12824      & 3233       \\ 
18         & 34560      & 46080      & 11159      \\ 
19         & 117597     & 50944      & 10918      \\ 
20         & 130215     & 173620     & 26876      \\ 
21         & 446052     & 194004     & 81941      \\ 
22         & 500478     & 667304     & 90163      \\ 
23         & 1694871    & 732232     & 118747     \\ 
24         & 1886562    & 2515416    & 200138     \\ 
25         & 6447250    & 2727452    & 709584     \\ 
26         & 7183879    & 9578506    & 737891     \\ 
27         & 24426370   & 10591928   & 7618474    \\ 
28         & 27265578   & 36354113   & 3687209    
\end{tabular}\end{center}
\caption{The number of sequences $\Aeven$, $\Aodd$,
and $A$ that passed the filtering conditions of our algorithm
in lengths up to~$28$.}
\label{tbl:listsizes}\end{table}

\begin{table}\begin{center}\begin{tabular}{cccc}
$n$ & $\abs{\Omega_{\text{seqs}}}$ & $\abs{\Omega_{\text{all}}}$ & $\abs{\Omega_{\text{inequiv}}}$ \\ \hline
1          & 4          & 16         & 1          \\ 
2          & 16         & 64         & 1          \\ 
3          & 16         & 128        & 1          \\ 
4          & 64         & 512        & 2          \\ 
5          & 64         & 512        & 1          \\ 
6          & 256        & 2048       & 3          \\ 
7          & 0          & 0          & 0          \\ 
8          & 768        & 6656       & 17         \\ 
9          & 0          & 0          & 0          \\ 
10         & 1536       & 12288      & 20         \\ 
11         & 64         & 512        & 1          \\ 
12         & 4608       & 36864      & 52         \\ 
13         & 64         & 512        & 1          \\ 
14         & 0          & 0          & 0          \\ 
15         & 0          & 0          & 0          \\ 
16         & 13312      & 106496     & 204        \\ 
17         & 0          & 0          & 0          \\ 
18         & 3072       & 24576      & 24         \\ 
19         & 0          & 0          & 0          \\ 
20         & 26880      & 215040     & 340        \\ 
21         & 0          & 0          & 0          \\ 
22         & 1024       & 8192       & 12         \\ 
23         & 0          & 0          & 0          \\ 
24         & 98304      & 786432     & 1056       \\ 
25         & 0          & 0          & 0          \\ 
26         & 1280       & 10240      & 16         \\ 
27         & 0          & 0          & 0          \\ 
28         & 0          & 0          & 0          
\end{tabular}\end{center}
\caption{The number complex Golay pairs in lengths up to~$28$.
The table counts the number of individual sequences, the
number of pairs, and the number of pairs up to equivalence.}
\label{tbl:counts}\end{table}

\begin{table}\begin{center}\begin{tabular}{cccc}
& \multicolumn{3}{c}{Total CPU Time in hours} \\
$n$ & ISSAC'18 & This paper \\ \hline
17 & 0.07 & 0.00 \\
18 & 0.27 & 0.02 \\ 
19 & 0.26 & 0.02 \\
20 & 0.80 & 0.12 \\
21 & 3.86 & 0.84 \\
22 & 10.77 & 1.03 \\
23 & 45.18 & 4.87 \\
24 & 86.97 & 8.34 \\
25 & 702.39 & 99.26 \\
26 & $-$ & 239.21 \\
27 & $-$ & 3305.24 \\
28 & $-$ & 2569.29 
\end{tabular}\end{center}
\caption{A comparison between the method presented at ISSAC 2018~\citep{bright2018enumeration}
and the method used in this paper.  The times measure the total amount
of computation time that was used to run a complete search in the lengths $17\leq n\leq28$
when run on the same hardware.}
\label{tbl:oldtimings}\end{table}

Because \cite{fiedler2013small}, \cite{gibson2011quaternary}, and \cite{CHK:DM:2002} do not provide
implementations or timings for the enumerations they completed it is not possible
for us to compare the efficiency of our algorithm to previous algorithms.
However, in Table~\ref{tbl:oldtimings} we compare our implementation's timings
to the timings we previously presented~\citep{bright2018enumeration}.
Table~\ref{tbl:oldtimings} shows that the improved version of our algorithm performs about
an order of magnitude faster in general.

\section{Future Work}

Besides increasing the length to which complex Golay pairs have been enumerated
there are a number of avenues for improvements that could be made in
future work.  As one example, we remark that
we have not exploited the algebraic structure
of complex Golay pairs revealed by \cite*{CHK:DM:2002}.
In particular, their work contains a theorem implying that if $p\equiv3\pmod{4}$
is a prime dividing~$n$ and~$A$ is a member of
a complex Golay pair of length $n$ then $A(z)$ is not irreducible
over $\FF_p(i)$.  Ensuring that this property holds
could be added to the filtering conditions used in stage~1.
In fact, the authors relate the factorization of $A(z)$ over $\FF_p(i)$
to the factorization of $B(z)$ over $\FF_p(i)$ for any complex Golay pair
$(A,B)$.  This factorization could potentially be used to perform stage~2
more efficiently, possibly supplementing or replacing the SAT solver entirely,
though it is unclear if such a method would perform better than our method in practice.
In any case, it would not be possible to apply their theorem in all lengths
since the length $n$ might only be divisible by primes $p=2$ or $p\equiv1\pmod{4}$.

A second possible improvement could be to use the symbolic
form of $f(\theta)$ defined in Section~\ref{sec:preproc}
to help find the maximum of $f(\theta)$.  For example,
a consequence of Lemma~\ref{lem:normeval} and Euler's identity is that
\[ f(\theta) = N_{A'}(0) + 2\sum_{s=1}^{n-1}\Re(N_{A'}(s))\cos(s\theta) + 2\sum_{s=1}^{n-1}\Im(N_{A'}(s))\sin(s\theta) \]
showing that $f$ has a simple form as a sum of sines and cosines.
Finding the real roots of the derivative of $f$
would reveal the locations of the local maxima and minima of $f$.
However, again it is unclear if this method would perform
better than our approximation method in practice.
Note that our method is not guaranteed to find
the global maximum of $f$ since we only apply the quadratic
interpolation optimization method to ranges that we
know contain a local maximum by examining the
initial sampling of $f$.  With a more careful branch-and-bound
algorithm it would be possible to guarantee that the global maximum
was found but the approximations that we found were effective enough
that we did not pursue this.

Another possible improvement could be obtained by deriving further properties
like Proposition~\ref{prop:prod} that complex Golay pairs must satisfy.
For example, consider the following property
that could be viewed as a strengthening of Proposition~\ref{prop:prod}:
\[ a_k\overline{a_{n-k-1}} = (-1)^{n+1}b_k\overline{b_{n-k-1}}
\qquad\text{for}\qquad\text{$k=1$, $\dotsc$, $n-2$}. \]
An examination of all complex Golay pairs up to length~$28$ reveals that
they all satisfy this property except for a \emph{single} complex Golay pair
up to equivalence. 
The only pairs that don't satisfy this property are equivalent to 
\[ \paren[\big]{[1, 1, 1, -1, 1, 1, -1, 1], [1, i, i, -1, 1, -i, -i, -1]} \]
and were already singled out by \cite*{fiedler2008multi} for being
the only known examples of what they call
``cross-over'' Golay sequence pairs.
Since a counterexample exists to this property there is no hope of proving
it in general, but perhaps a suitable generalization could be proven.

Lastly, the running time analysis of our algorithm could be improved.
We have given a crude upper bound of $4^{3n/2}=8^n$ for the complexity
of the parts of our algorithm that do not run in polynomial time.  However,
this running time would be improved by tighter bounds on either
the size of $\LA$ or the number of possible assignments that the SAT solver will try.

\section*{Acknowledgements}

This work was made possible by the facilities of the Shared Hierarchical 
Academic Research Computing Network (SHARCNET) and Compute/Calcul Canada.
The authors would also like to thank the anonymous reviewers
for their comments on this paper and on our ISSAC 2018 submission.
Their careful reviews improved our work's clarity and presentation.

\bibliography{jsc-cgolay}

\begin{thebibliography}{40}
\expandafter\ifx\csname natexlab\endcsname\relax\def\natexlab#1{#1}\fi
\expandafter\ifx\csname url\endcsname\relax
  \def\url#1{\texttt{#1}}\fi
\expandafter\ifx\csname urlprefix\endcsname\relax\def\urlprefix{URL }\fi

\bibitem[{\'Abrah\'am(2015)}]{abraham2015building}
\'Abrah\'am, E., 2015. {Building bridges between symbolic computation and
  satisfiability checking}. In: Proceedings of the 2015 ACM on International
  Symposium on Symbolic and Algebraic Computation. ACM, New York, pp. 1--6.
\newline\urlprefix\url{https://doi.org/10.1145/2755996.2756636}

\bibitem[{{\'A}brah{\'a}m et~al.(2016){\'A}brah{\'a}m, Abbott, Becker, Bigatti,
  Brain, Buchberger, Cimatti, Davenport, England, Fontaine, Forrest, Griggio,
  Kroening, Seiler, and Sturm}]{sc2}
{\'A}brah{\'a}m, E., Abbott, J., Becker, B., Bigatti, A.~M., Brain, M.,
  Buchberger, B., Cimatti, A., Davenport, J.~H., England, M., Fontaine, P.,
  Forrest, S., Griggio, A., Kroening, D., Seiler, W.~M., Sturm, T., 2016.
  {$\text{SC}^2$: Satisfiability Checking meets Symbolic Computation}. In:
  Intelligent Computer Mathematics: 9th International Conference, CICM 2016,
  Bialystok, Poland, July 25--29, 2016, Proceedings. Springer International
  Publishing, Cham, pp. 28--43, \url{http://www.sc-square.org/}.
\newline\urlprefix\url{https://doi.org/10.1007/978-3-319-42547-4_3}

\bibitem[{Barrett et~al.(2016)Barrett, Fontaine, and Tinelli}]{BarFT-SMTLIB}
Barrett, C., Fontaine, P., Tinelli, C., 2016. {The Satisfiability Modulo
  Theories Library (SMT-LIB)}.
\newline\urlprefix\url{http://www.SMT-LIB.org}

\bibitem[{Bright(2017)}]{brightthesis}
Bright, C., 2017. Computational methods for combinatorial and number theoretic
  problems. Ph.D. thesis, University of Waterloo.
\newline\urlprefix\url{http://hdl.handle.net/10012/11761}

\bibitem[{Bright et~al.(2016)Bright, Ganesh, Heinle, Kotsireas, Nejati, and
  Czarnecki}]{bright2016mathcheck}
Bright, C., Ganesh, V., Heinle, A., Kotsireas, I.~S., Nejati, S., Czarnecki,
  K., 2016. {\textsc{MathCheck2}: {A} {SAT+CAS} Verifier for Combinatorial
  Conjectures}. In: Computer Algebra in Scientific Computing - 18th
  International Workshop, {CASC} 2016, Bucharest, Romania, September 19--23,
  2016, Proceedings. pp. 117--133.
\newline\urlprefix\url{https://doi.org/10.1007/978-3-319-45641-6_9}

\bibitem[{Bright et~al.(2018{\natexlab{a}})Bright, Kotsireas, and
  Ganesh}]{bright2018sat}
Bright, C., Kotsireas, I., Ganesh, V., 2018{\natexlab{a}}. A {SAT}+{CAS} method
  for enumerating {W}illiamson matrices of even order. In: Proceedings of the
  Thirty-Second {AAAI} Conference on Artificial Intelligence, New Orleans,
  Louisiana, USA, February 2--7, 2018. pp. 6573--6580.
\newline\urlprefix\url{https://www.aaai.org/ocs/index.php/AAAI/AAAI18/paper/view/16625}

\bibitem[{Bright et~al.(2018{\natexlab{b}})Bright, Kotsireas, Heinle, and
  Ganesh}]{bright2018enumeration}
Bright, C., Kotsireas, I.~S., Heinle, A., Ganesh, V., 2018{\natexlab{b}}.
  Enumeration of complex {G}olay pairs via programmatic {SAT}. In: Proceedings
  of the 2018 {ACM} on International Symposium on Symbolic and Algebraic
  Computation, {ISSAC} 2018, New York, NY, USA, July 16--19, 2018. pp.
  111--118.
\newline\urlprefix\url{https://doi.org/10.1145/3208976.3209006}

\bibitem[{Bright et~al.(2019)Bright, {\DJ}okovi\'c, Kotsireas, and
  Ganesh}]{aaai2019preprint}
Bright, C., {\DJ}okovi\'c, D.~{\v{Z}}., Kotsireas, I., Ganesh, V., 2019. A
  {SAT}+{CAS} approach to finding good matrices: {New} examples and
  counterexamples. In: Proceedings of the Thirty-Third {AAAI} Conference on
  Artificial Intelligence, Honolulu, Hawaii, USA, January 27 -- February 1,
  2019. pp. 1435--1442.
\newline\urlprefix\url{https://doi.org/10.1609/aaai.v33i01.33011435}

\bibitem[{Craigen(1994)}]{C:JCMCC:1994}
Craigen, R., 1994. Complex {G}olay sequences. Journal of Combinatorial
  Mathematics and Combinatorial Computing 15, 161--169.

\bibitem[{Craigen et~al.(2002)Craigen, Holzmann, and Kharaghani}]{CHK:DM:2002}
Craigen, R., Holzmann, W., Kharaghani, H., 2002. Complex {G}olay sequences:
  structure and applications. Discrete Math. 252~(1--3), 73--89.
\newline\urlprefix\url{https://doi.org/10.1016/S0012-365X(01)00162-5}

\bibitem[{Davis and Jedwab(1999)}]{davis1999peak}
Davis, J.~A., Jedwab, J., 1999. Peak-to-mean power control in {OFDM}, {G}olay
  complementary sequences, and {R}eed--{M}uller codes. IEEE Transactions on
  Information Theory 45~(7), 2397--2417.
\newline\urlprefix\url{https://doi.org/10.1109/18.796380}

\bibitem[{Donato et~al.(2004)Donato, Urena, Mazo, and
  Alvarez}]{donato2004train}
Donato, P.~G., Urena, J., Mazo, M., Alvarez, F., June 2004. Train wheel
  detection without electronic equipment near the rail line. In: IEEE
  Intelligent Vehicles Symposium, 2004. pp. 876--880.
\newline\urlprefix\url{https://doi.org/10.1109/IVS.2004.1336500}

\bibitem[{Fiedler(2013)}]{fiedler2013small}
Fiedler, F., 2013. Small {G}olay sequences. Advances in Mathematics of
  Communications 7~(4).
\newline\urlprefix\url{http://doi.org/10.3934/amc.2013.7.379}

\bibitem[{Fiedler et~al.(2008{\natexlab{a}})Fiedler, Jedwab, and
  Parker}]{fiedler2008framework}
Fiedler, F., Jedwab, J., Parker, M.~G., 2008{\natexlab{a}}. A framework for the
  construction of {G}olay sequences. IEEE Transactions on Information Theory
  54~(7), 3114--3129.
\newline\urlprefix\url{https://doi.org/10.1109/TIT.2008.924667}

\bibitem[{Fiedler et~al.(2008{\natexlab{b}})Fiedler, Jedwab, and
  Parker}]{fiedler2008multi}
Fiedler, F., Jedwab, J., Parker, M.~G., 2008{\natexlab{b}}. A multi-dimensional
  approach to the construction and enumeration of {G}olay complementary
  sequences. Journal of Combinatorial Theory, Series A 115~(5), 753--776.
\newline\urlprefix\url{https://doi.org/10.1016/j.jcta.2007.10.001}

\bibitem[{Frank(1980)}]{frank1980polyphase}
Frank, R., 1980. Polyphase complementary codes. IEEE Transactions on
  Information theory 26~(6), 641--647.
\newline\urlprefix\url{https://doi.org/10.1109/TIT.1980.1056272}

\bibitem[{Frigo and Johnson(2005)}]{frigo2005design}
Frigo, M., Johnson, S.~G., 2005. The design and implementation of {FFTW}3.
  Proceedings of the IEEE 93~(2), 216--231.
\newline\urlprefix\url{https://doi.org/10.1109/JPROC.2004.840301}

\bibitem[{Ganesh et~al.(2012)Ganesh, O'Donnell, Soos, Devadas, Rinard, and
  Solar-Lezama}]{ganesh2012lynx}
Ganesh, V., O'Donnell, C.~W., Soos, M., Devadas, S., Rinard, M.~C.,
  Solar-Lezama, A., 2012. Lynx: A programmatic {SAT} solver for the
  {RNA}-folding problem. In: International Conference on Theory and
  Applications of Satisfiability Testing. Springer, pp. 143--156.
\newline\urlprefix\url{https://doi.org/10.1007/978-3-642-31612-8_12}

\bibitem[{Gersho et~al.(1979)Gersho, Gopinath, and
  Odlyzko}]{gersho1979coefficient}
Gersho, A., Gopinath, B., Odlyzko, A., 1979. Coefficient inaccuracy in
  transversal filtering. Bell System Technical Journal 58~(10), 2301--2316.
\newline\urlprefix\url{https://doi.org/10.1002/j.1538-7305.1979.tb02968.x}

\bibitem[{Gibson and Jedwab(2011)}]{gibson2011quaternary}
Gibson, R.~G., Jedwab, J., 2011. Quaternary {G}olay sequence pairs {I}: {E}ven
  length. Designs, Codes and Cryptography 59~(1-3), 131--146.
\newline\urlprefix\url{https://doi.org/10.1007/s10623-010-9471-z}

\bibitem[{Golay(1961)}]{golay1961complementary}
Golay, M., 1961. Complementary series. IRE Transactions on Information Theory
  7~(2), 82--87.
\newline\urlprefix\url{https://doi.org/10.1109/TIT.1961.1057620}

\bibitem[{Golay(1949)}]{golay1949multi}
Golay, M.~J., 1949. Multi-slit spectrometry. JOSA 39~(6), 437--444.
\newline\urlprefix\url{https://doi.org/10.1364/JOSA.39.000437}

\bibitem[{Holzmann and Kharaghani(1994)}]{HK:AJC:1994}
Holzmann, W.~H., Kharaghani, H., 1994. A computer search for complex {G}olay
  sequences. Australasian Journal of Combinatorics 10, 251--258.

\bibitem[{Kotsireas(2013)}]{kotsireas2013algorithms}
Kotsireas, I.~S., 2013. Algorithms and metaheuristics for combinatorial
  matrices. In: Handbook of Combinatorial Optimization. Springer, pp. 283--309.
\newline\urlprefix\url{https://doi.org/10.1007/978-1-4419-7997-1_13}

\bibitem[{Kotsireas et~al.(2009)Kotsireas, Koukouvinos, and
  Seberry}]{kotsireas2009weighing}
Kotsireas, I.~S., Koukouvinos, C., Seberry, J., 2009. Weighing matrices and
  string sorting. Annals of Combinatorics 13~(3), 305--313.
\newline\urlprefix\url{https://doi.org/10.1007/s00026-009-0027-8}

\bibitem[{Krone and Sarwate(1984)}]{krone1984quadriphase}
Krone, S., Sarwate, D., 1984. Quadriphase sequences for spread-spectrum
  multiple-access communication. IEEE Transactions on Information Theory
  30~(3), 520--529.
\newline\urlprefix\url{https://doi.org/10.1109/TIT.1984.1056913}

\bibitem[{Li et~al.(2008)Li, Chen, Zhang, and Zhou}]{li2008construction}
Li, S., Chen, J., Zhang, L., Zhou, Y., 2008. Construction of quadri-phase
  complete complementary pairs applied in {MIMO} radar systems. In: ICSP 2008:
  9th International Conference on Signal Processing. IEEE, pp. 2298--2301.
\newline\urlprefix\url{https://doi.org/10.1109/ICOSP.2008.4697608}

\bibitem[{Li and Chu(2005)}]{li2005more}
Li, Y., Chu, W.~B., 2005. More golay sequences. IEEE Transactions on
  Information Theory 51~(3), 1141--1145.
\newline\urlprefix\url{https://doi.org/10.1109/TIT.2004.842775}

\bibitem[{Liang et~al.(2017)Liang, Poupart, Czarnecki, and
  Ganesh}]{liang2017empirical}
Liang, J.~H., Poupart, P., Czarnecki, K., Ganesh, V., 2017. An empirical study
  of branching heuristics through the lens of global learning rate. In:
  International Conference on Theory and Applications of Satisfiability
  Testing. Springer, pp. 119--135.
\newline\urlprefix\url{https://doi.org/10.1007/978-3-319-66263-3_8}

\bibitem[{Lomayev et~al.(2017)Lomayev, Gagiev, Maltsev, Kasher, Genossar, and
  Cordeiro}]{lomayev2017golay}
Lomayev, A., Gagiev, Y., Maltsev, A., Kasher, A., Genossar, M., Cordeiro, C.,
  Nov.~9 2017. Golay sequences for wireless networks. US Patent App.
  15/280,635.
\newline\urlprefix\url{https://www.google.com/patents/US20170324461}

\bibitem[{Monagan et~al.(2005)Monagan, Geddes, Heal, Labahn, Vorkoetter,
  McCarron, and DeMarco}]{Maple10}
Monagan, M.~B., Geddes, K.~O., Heal, K.~M., Labahn, G., Vorkoetter, S.~M.,
  McCarron, J., DeMarco, P., 2005. Maple~10 Programming Guide. Maplesoft,
  Waterloo ON, Canada.

\bibitem[{Nazarathy et~al.(1989)Nazarathy, Newton, Giffard, Moberly, Sischka,
  Trutna, and Foster}]{nazarathy1989real}
Nazarathy, M., Newton, S.~A., Giffard, R., Moberly, D., Sischka, F., Trutna,
  W., Foster, S., 1989. Real-time long range complementary correlation optical
  time domain reflectometer. Journal of Lightwave Technology 7~(1), 24--38.
\newline\urlprefix\url{https://doi.org/10.1109/50.17729}

\bibitem[{Nowicki et~al.(2003)Nowicki, Secomski, Litniewski, Trots, and
  Lewin}]{nowicki2003application}
Nowicki, A., Secomski, W., Litniewski, J., Trots, I., Lewin, P., 2003. On the
  application of signal compression using {G}olay's codes sequences in
  ultrasound diagnostic. Archives of Acoustics 28~(4).
\newline\urlprefix\url{http://acoustics.ippt.gov.pl/index.php/aa/article/view/460}

\bibitem[{Popovi\'c(1991)}]{popovic1991synthesis}
Popovi\'c, B.~M., 1991. Synthesis of power efficient multitone signals with
  flat amplitude spectrum. IEEE transactions on Communications 39~(7),
  1031--1033.
\newline\urlprefix\url{https://doi.org/10.1109/26.87205}

\bibitem[{Riel(2006)}]{nsoks}
Riel, J., 2006. \textsc{nsoks}: A \textsc{Maple} script for writing $n$ as a
  sum of $k$ squares.
\newline\urlprefix\url{https://www.swmath.org/software/21060}

\bibitem[{Seberry et~al.(2002)Seberry, Wysocki, and Wysocki}]{seberry2002use}
Seberry, J.~R., Wysocki, B.~J., Wysocki, T.~A., 2002. On a use of {G}olay
  sequences for asynchronous {DS} {CDMA} applications. In: Advanced Signal
  Processing for Communication Systems. Springer, pp. 183--196.
\newline\urlprefix\url{https://doi.org/10.1007/0-306-47791-2_14}

\bibitem[{Sivaswamy(1978)}]{sivaswamy1978multiphase}
Sivaswamy, R., 1978. Multiphase complementary codes. IEEE Transactions on
  Information Theory 24~(5), 546--552.
\newline\urlprefix\url{https://doi.org/10.1109/TIT.1978.1055936}

\bibitem[{Sun and Yuan(2006)}]{sun2006optimization}
Sun, W., Yuan, Y., 2006. Optimization theory and methods: nonlinear
  programming. Springer Science \& Business Media.
\newline\urlprefix\url{https://doi.org/10.1007/b106451}

\bibitem[{Tseng(1971)}]{tseng1971signal}
Tseng, C.~C., 1971. Signal multiplexing in surface-wave delay lines using
  orthogonal pairs of {G}olay's complementary sequences. IEEE Transactions on
  Sonics and Ultrasonics 18~(2), 103--107.
\newline\urlprefix\url{https://doi.org/10.1109/T-SU.1971.29599}

\bibitem[{Zulkoski et~al.(2017)Zulkoski, Bright, Heinle, Kotsireas, Czarnecki,
  and Ganesh}]{DBLP:journals/jar/ZulkoskiBHKCG17}
Zulkoski, E., Bright, C., Heinle, A., Kotsireas, I.~S., Czarnecki, K., Ganesh,
  V., 2017. Combining {SAT} solvers with computer algebra systems to verify
  combinatorial conjectures. Journal of Automated Reasoning 58~(3), 313--339.
\newline\urlprefix\url{https://doi.org/10.1007/s10817-016-9396-y}

\end{thebibliography}

\end{document}